\documentclass[conference]{IEEEtran}
\usepackage{epsfig,epstopdf} % COMMENT FOR ARXIV SUBMISSIONS
\usepackage{amsmath,amssymb,amsthm,cite,graphicx,array}
\usepackage{algorithm}
\usepackage[noend]{algpseudocode}
\theoremstyle{definition}
\newtheorem{theorem}{Theorem}
\newtheorem{lemma}{Lemma}
\newtheorem{corollary}{Corollary}

\newtheorem{definition}{Definition}

\newtheorem{example}{Example}
\usepackage{color}

%\pdfoutput=1
\title{Low-Complexity Decoding for Symmetric, Neighboring and Consecutive Side-information Index Coding Problems}

\begin{document}

\author{Mahesh~Babu~Vaddi~and~B.~Sundar~Rajan\\ 
 Department of Electrical Communication Engineering, Indian Institute of Science, Bengaluru 560012, KA, India \\ E-mail:~\{mahesh,~bsrajan\}@ece.iisc.ernet.in }
 
\maketitle
%%%%%%%%%%%%%%%%%%%%%%%
\begin{abstract}
The capacity of symmetric, neighboring and consecutive side-information single unicast index coding problems (SNC-SUICP) with number of messages equal to the number of receivers was given by Maleki, Cadambe and Jafar. For these index coding problems, an optimal index code construction by using Vandermonde matrices was proposed. This construction requires all the side-information at the receivers to decode their wanted messages and also requires large field size. In an  earlier work, we constructed binary matrices of size $m \times n (m\geq n)$ such that any $n$ adjacent rows of the matrix are linearly independent over every field. Calling these matrices as Adjacent Independent Row (AIR) matrices using which we gave an optimal scalar linear index code for the one-sided SNC-SUICP for any given number of messages and one-sided side-information. By using Vandermonde matrices or AIR matrices, every receiver needs to solve $K-D$ equations with $K-D$ unknowns to obtain its wanted message, where $K$ is the number of messages and $D$ is the size of the side-information. In this paper, we analyze some of the combinatorial properties of the AIR matrices. By using these properties, we present a low-complexity decoding which helps to identify a reduced set of side-information for each users with which the decoding can be carried out. By this method every receiver is able to decode its wanted message symbol by simply adding some index code symbols (broadcast symbols). We explicitly give both the reduced set of side-information and the broadcast messages to be used  by each receiver to decode its wanted message. For a given  pair or receivers our decoding identifies  which one will perform better than the other when the broadcast channel is noisy.  

%Also this enables construction of  a class of non-symmetric index coding problems for which the capacity is known and the capacity can be achieved by scalar linear codes over every field.

\end{abstract}
%%%%%%%%%%%%%%%%%
\section{Introduction and Background}
\label{sec1}
In an index coding problem, there is a unique source $S$ having a set of $K$ messages $X= \lbrace x_{0},x_{1},\ldots,x_{K-1} \rbrace $ and a set of $m$ receivers $\mathcal{R}=\lbrace R_{0},R_{1},\ldots,R_{m-1} \rbrace$. The messages $x_{k}$, $k \in \{0,1,2,\cdots,K-1\}=[0:K-1]$ take values from some finite field $\mathbb{F}_q$. Each receiver $R_{k} \in \mathcal{R}$ is specified by the tuple $(\mathcal{W}_{k},\mathcal{K}_k)$, where $\mathcal{W}_{k} \subseteq X$ are the messages demanded by $R_{k}$ and $\mathcal{K}_{k} \subseteq X \setminus \mathcal{W}_{k}$ is the information known at the receiver called as side-information of the receiver. An index coding problem is completely specified by $(X,\mathcal{R})$ and it is referred as $\mathcal{I}(X,\mathcal{R})$. 

An \textit{index code} over $\mathbb{F}_{q}$ for an instance of the index coding problem $\mathcal{I}(X,\mathcal{R})$, is an encoding function $\mathfrak{E}:\mathbb{F}_{q}^{K} \rightarrow \mathbb{F}_{q}^N$ such that for each receiver $R_{k}$, $k\in [0:m-1]$, there exists a decoding function $\mathfrak{D}_{k}:\mathbb{F}_{q}^{N} \times \mathbb{F}_{q}^{|\mathcal{K}_{k}|}\rightarrow \mathbb{F}_{q}^{|{\mathcal{W}_{k}}|} $ satisfying $ \mathfrak{D}_{k}(\mathfrak{E}(X),\mathcal{K}_{k})=\mathcal{W}_{k} , \forall \; X \in \mathbb{F}_{q}^{K}$. An index code is said to be \textit{linear} if the encoding function $\mathfrak{E}$ is  linear. The parameter $N$ is called the \textit{length} of the index code. 

Birk and Kol \cite{BiK} introduced the problem of index coding with side-information. Ong and Ho \cite{OnH} classify the binary index coding problem depending on the demands and the side-information possessed by the receivers. An index coding problem is unicast if the demand sets of the receivers are disjoint. An index coding problem is single unicast if the demand sets of the receivers are disjoint and the cardinality of demand set of every receiver is one. Any unicast index problem can be converted into a single unicast index coding problem. Bar-Yossef \textit{et al.} \cite{YBJK} studied single unicast index coding problems. It was found that the length of an optimal linear index code is equal to the minrank of the side-information graph of the index coding problem but finding the minrank is NP hard \cite{minrank}.

Maleki \textit{et al.} \cite{MCJ} found the capacity of symmetric neighboring and consecutive side-information single unicast index coding problem (SNC-SUICP). In a SNC-SUICP with equal number of $K$ messages and source-destination pairs, each destination has a total of $U+D=A<K$ side-information, corresponding to the $U$ messages before and $D$ messages after its desired message. In this setting, the $k-$th receiver $R_{k}$ demands the message $x_{k}$ having the side-information
%%%%%%%%
\begin{equation}
\label{side-information}
\{x_{k-U},\dots,x_{k-2},x_{k-1}\}~\cup~\{x_{k+1}, x_{k+2},\dots,x_{k+D}\}.
\end{equation}
%%%%%%%%%
The symmetric capacity of this index coding problem is:
\begin{equation}
\label{capacity}
C=\left\{
                \begin{array}{ll}
                  {1,\qquad\quad\ A=K-1}\\
                  {\frac{U+1}{K-A+2U}},A\leq K-2\qquad \text{per message},
                  \end{array}
              \right.
\end{equation}
               where $U,D \in$ $\mathbb{Z},$ $0 \leq U \leq D$, and $U+D=A<K$. In the setting of \cite{MCJ} with one sided side-information cases, i.e., the cases where $U$ is zero,
%without loss of generality, we can assume that $max(U,D)= D$ and $min(U,D)=0$ (all the results hold when $max(U,D)=U$), i.e.,
%%%
\begin{equation}
\label{side-informationmcj}
{\cal K}_k =\{x_{k+1}, x_{k+2},\dots,x_{k+D}\}, 
\end{equation}
%%%
\noindent
for which \eqref{capacity} reduces to
%%%%%%%%
\begin{equation}
\label{capacity1}
C=\left\{
                \begin{array}{ll}
                  {1 ~~~~~~~~~~~~ \mbox{if} ~~ D=K-1}\\
                  {\frac{1}{K-D}} ~~~~~~ \mbox{if} ~~D\leq K-2, 
                  \end{array}
              \right.
\end{equation}
%%%%%%%%
symbols per message.

The SNC-SUICP considered in this paper was referred as symmetric neighboring antidote multiple unicast index coding problem by Maleki, Cadambe and Jafar in \cite{MCJ}.
%%%%%%%%%%%%%%%%%%%%%%%%%%%%%%%%%%%%%%%%%
%%%%%%%%%%%%%%%%%%%%%%%%%%%%%%%%%%%%%%%%%%
\subsection{Motivation and Contributions}
Maleki \textit{et al.} \cite{MCJ} proposed index code construction by using Vandermonde matrices. However the index code construction procedure by using Vandermonde matrices require large field size and the field size in this method depends on the number of messages $K$. This method also requires every receiver to use all of its $D$ side-information to decode its wanted message. The problem of finding an optimum length index code is known to be NP-hard \cite{YBJK,ChS} and the complexity of finding an optimal solution increases exponentially with number of messages and the number of side-information. In \cite{VaR2}, we gave a construction of binary matrices with a given size $m \times n~(m \geq n)$, such that any $n$ adjacent rows in the matrix are linearly independent over every field $\mathbb{F}_q$. We refer these matrices as Adjacent Independent Row (AIR) matrices . By using AIR matrices, we presented capacity achieving scalar linear codes for given $K$ and $D$ over every field $\mathbb{F}_q$, which is independent of $K$. By using Vandermonde matrices or AIR matrices, every receiver require to solve $K-D$ equations with $K-D$ unknowns to know its wanted message. The contributions of this paper may be summarized as below:
\begin{itemize}
\item First  we express the optimal scalar linear codes given in \cite{VaR2} in terms of a set of Boolean expressions. 
\item  We analyze some of the combinatorial properties of AIR matrices. By using these properties, we give a simple low-complexity decoding procedure for one-sided SNC-SUICP index codes by using AIR matrices. By using this decoding procedure, every receiver can decode its wanted message by simply adding some subset of index code symbols (broadcast symbols) instead of solving $K-D$ linear equations.
\item We give a reduced set of side-information required by each receiver to decode its wanted message. We explicitly specify this reduced set and the set of broadcast transmissions to be used by every receiver.
\item Our decoding procedure identifies receivers that will perform better than some others when the broadcast channel is noisy. Moreover, it also identifies the receivers which will be able to decode their wanted messages instantly (without using buffers), i.e., the receivers for which the index code appears as an instantly decodable network code (IDNC) \cite{DSAA}.
\end{itemize}

The remaining part of this paper is organized as follows. Definition of AIR matrices by an algorithmic Construction and the structure of their submatrices are given in Section \ref{sec2}. In Section \ref{sec3}, we present the optimal index code for one-sided SNC-SUICP by using AIR matrices in Boolean expressions. The low-complexity  decoding and the reduced side-information patterns used by the receivers are explained in  Section \ref{sec4}. In \ref{sec5}, we study the performance of proposed reduced complexity decoding for one-sided SNC-SUICP in noisy index coding environment. We conclude the paper in Section \ref{sec6} with a brief discussion and possible directions for further work. 

All the subscripts in this paper are to be considered $~\text{\textit{modulo}}~ K$. 
%%%%%%%%%%%%%%%%%%%%%%%%%%%%%%%%%%%%%%%%%
%%%%%%%%%%%%%%%%%%%%%%%%%%%%%%%%%%%%%%%%%
\begin{figure*}
\centering
\includegraphics[scale=0.6]{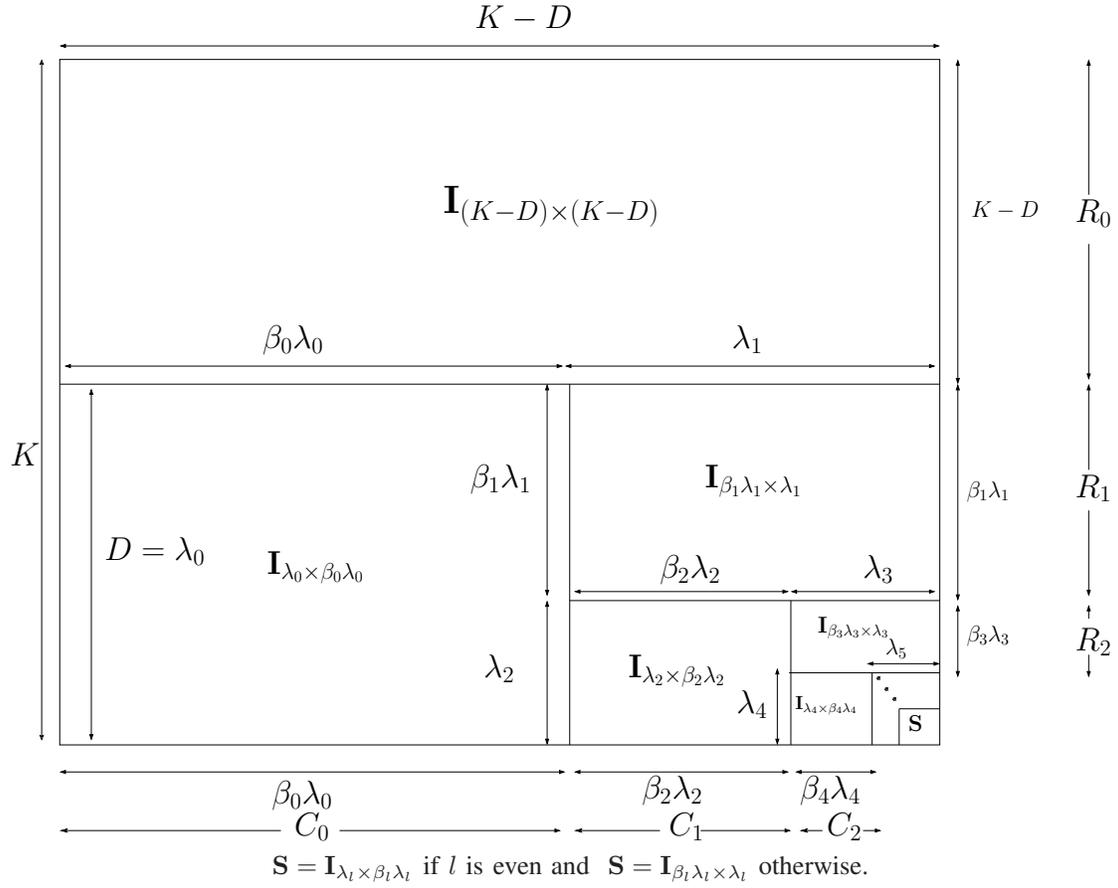}\\
~ $\mathbf{S}=\mathbf{I}_{\lambda_{l} \times \beta_l \lambda_{l}}$ if $l$ is even and ~$\mathbf{S}=\mathbf{I}_{\beta_l\lambda_{l} \times \lambda_{l}}$ otherwise.
\caption{AIR matrix of size $K \times (K-D)$.}
\label{fig1}
~ \\
\hrule
\end{figure*}
%%%%%%%%%%%%%%%%%%%%%%%%%%%%%%%%%%%%%%%%%
%%%%%% Section II %%%%%%%%%%%%%%%%%%%
\section{AIR matrices and their sub-matrices}
\label{sec2}
In this section, we express optimal scalar linear index code for one-sided SNC-SUICP by using AIR matrices in Boolean expression. We study some of the combinatorial properties of AIR matrices  and by using these properties, we give the simplified decoding and reduced side-information patterns for the SNC-SUICP. 

%%%%%%%% AIR Matrix construction%%%%%%%%%%%%%%%%%%%%%%%
Given $K$ and $D$ the $K \times (K-D)$  matrix obtained by Algorithm I is called the $(K,D)$ AIR matrix and it is denoted by $\mathbf{L}_{K\times (K-D)}.$ The general form of the $(K,D)$ AIR matrix is shown in   Fig. \ref{fig1}. It consists of several submatrices (rectangular boxes) of different sizes as shown in Fig.\ref{fig1}. The location and sizes of these submatrices are used subsequently to obtain a Boolean expression for the index code and a Boolean expression for the decoding of all the messages. \\
%%%%%%%%%%%%%%%%%%%%%%%%%%%%%%%%%%
The description of the submatrices are as follows: Let $m$ and $n$ be two positive integers and $n$ divides $m$. The following matrix  denoted by $\mathbf{I}_{m \times n}$ is a rectangular  matrix.
\begin{align}
\label{rcmatrix}
\mathbf{I}_{m \times n}=\left.\left[\begin{array}{*{20}c}
   \mathbf{I}_n  \\
   \mathbf{I}_n  \\
   \vdots  \\
   %\mathbf{I}_\lambda   \\
   \mathbf{I}_n 
   \end{array}\right]\right\rbrace \frac{m}{n}~\text{number~of}~ \mathbf{I}_n~\text{matrices}
\end{align}
and $\mathbf{I}_{n \times m}$ is the transpose of $\mathbf{I}_{m \times n}.$ We will call the $\mathbf{I}_{m \times n}$ matrix the $(m \times n)$ identity matrix. 
%%%%%%%%%%%%%
%%%%%%%%%%%%% ALGORITHM AIR %%%%%%%%%%%%%%%%%%%%

		\begin{algorithm}
			{Algorithm to construct the AIR matrix $\mathbf{L}$ of size $K \times (K-D)$}
			\begin{algorithmic}[1]
				% \item $m$ and $n$ are integers such that $mK+n(D+1)=\text{gcd}(K,D+1)$.
				 \item []Given $K$ and $D$ let $\mathbf{L}=K \times (K-D)$ blank unfilled matrix.
				\item [Step 1]~~~
				\begin{itemize}
%				\item[\footnotesize{1.1:}] Let $\mathbf{M}=K \times (K-D)$ blank unfilled matrix.
				\item[\footnotesize{1.1:}] Let $K=q(K-D)+r$ for $r < K-D$.
				\item[\footnotesize{1.2:}] Use $\mathbf{I}_{q(K-D) \times (K-D)}$ to fill the first $q(K-D)$ rows of the unfilled part of $\mathbf{L}$.
				\item[\footnotesize{1.3:}] If $r=0$,  Go to Step 3.
				\end{itemize}

				\item [Step 2]~~~
				\begin{itemize}
%				\item[\footnotesize{2.1:}] Let $\mathbf{M}=r \times (K-D)$ unfilled part of the matrix $\mathbf{L}$.
				\item[\footnotesize{2.1:}] Let $(K-D)=q^{\prime}r+r^{\prime}$ for $r^{\prime} < r$.
				\item[\footnotesize{2.2:}] Use $\mathbf{I}_{q^{\prime}r \times r}^{\mathsf{T}}$ to fill the first $q^{\prime}r$ columns of the unfilled part of $\mathbf{L}$.
			    \item[\footnotesize{2.3:}] If $r^{\prime}=0$,  go to Step 3.	
				\item[\footnotesize{2.4:}] $K\leftarrow r$ and $K-D\leftarrow r^{\prime}$.
				\item[\footnotesize{2.5:}] Go to Step 1.
				\end{itemize}
				\item [Step 3] Exit.
		
			\end{algorithmic}
			\label{algo1}
		\end{algorithm}

%%%%%%%%%%%%%%%%%%%%%%%%%%
The simplicity of Algorithm I is illustrated by the following two AIR matrices, discussed in the following section,  $\mathbf{L}_{10 \times 7}$ and $\mathbf{L}_{17 \times 10}$ in Examples \ref{ex1} and \ref{ex2} respectively.  AIR matrices  $\mathbf{L}_{13 \times 10},$  $\mathbf{L}_{13 \times 3}$ and $\mathbf{L}_{44 \times 27}$ can be seen in Examples 3,4 and 6 in the following sections.
%%%%%%%%%%%%%%%%%%%%%%
\arraycolsep=1.0pt
\setlength\extrarowheight{-2pt}
{
$$\mathbf{L}_{10 \times 7}=\left[
\begin{array}{cccccccccc}
1 & 0 & 0 & 0 & 0 & 0 & 0\\
0 & 1 & 0 & 0 & 0 & 0 & 0\\
0 & 0 & 1 & 0 & 0 & 0 & 0\\
0 & 0 & 0 & 1 & 0 & 0 & 0\\
0 & 0 & 0 & 0 & 1 & 0 & 0\\
0 & 0 & 0 & 0 & 0 & 1 & 0\\
0 & 0 & 0 & 0 & 0 & 0 & 1\\
\hline
1 & 0 & 0 & 1 & 0 & 0 &\vline 1\\
0 & 1 & 0 & 0 & 1 & 0 & \vline 1\\
0 & 0 & 1 & 0 & 0 & 1 &\vline 1\\
 \end{array}
\right]$$
}
%%%%%%%%%%%

\arraycolsep=0.9pt
\setlength\extrarowheight{-2.0pt}
{
$$\mathbf{L}_{17 \times 10}=\left[
\begin{array}{cccccccccc}
1 & 0 & 0 & 0 & 0 & 0 & 0 & 0 & 0 & 0\\
0 & 1 & 0 & 0 & 0 & 0 & 0 & 0 & 0 & 0\\
0 & 0 & 1 & 0 & 0 & 0 & 0 & 0 & 0 & 0\\
0 & 0 & 0 & 1 & 0 & 0 & 0 & 0 & 0 & 0\\
0 & 0 & 0 & 0 & 1 & 0 & 0 & 0 & 0 & 0\\
0 & 0 & 0 & 0 & 0 & 1 & 0 & 0 & 0 & 0\\
0 & 0 & 0 & 0 & 0 & 0 & 1 & 0 & 0 & 0\\
0 & 0 & 0 & 0 & 0 & 0 & 0 & 1 & 0 & 0\\
0 & 0 & 0 & 0 & 0 & 0 & 0 & 0 & 1 & 0\\
0 & 0 & 0 & 0 & 0 & 0 & 0 & 0 & 0 & 1\\
\hline
1 & 0 & 0 & 0 & 0 & 0 & 0 &\vline 1 & 0 & 0\\
0 & 1 & 0 & 0 & 0 & 0 & 0 &\vline 0 & 1 & 0\\
0 & 0 & 1 & 0 & 0 & 0 & 0 &\vline 0 & 0 & 1\\
0 & 0 & 0 & 1 & 0 & 0 & 0 &\vline 1 & 0 & 0\\
0 & 0 & 0 & 0 & 1 & 0 & 0 &\vline 0 & 1 & 0\\
0 & 0 & 0 & 0 & 0 & 1 & 0 &\vline \underline{0} & \underline{0} & \underline{1}\\
0 & 0 & 0 & 0 & 0 & 0 & 1 &\vline 1 & 1 & 1\\
 \end{array}
\right]$$
}
%%%%%%%%%%%%%%%%%%%%%%%%%%%

AIR matrices have been shown to be giving optimal length index codes for one-sided SNC-SUICP problems \cite{VaR2} and are shown to be useful to obtain optimal length index codes for two-sided SNC-SUICP problems in \cite{VaR1}. 

%%%%%%%%%%%%%
Towards explaining the other quantities shown in the AIR matrix shown in Fig. \ref{fig1}, for a given $K$  and $D,$ let  $D= \lambda_0$ and\begin{align}
\nonumber
%K&=\beta (K-D) +\lambda_0, \nonumber \\
K-D&=\beta_0 \lambda_0+\lambda_1, \nonumber \\
\lambda_0&=\beta_1\lambda_1+\lambda_2, \nonumber \\
\lambda_1&=\beta_2\lambda_2+\lambda_3, \nonumber \\
\lambda_2&=\beta_3\lambda_3+\lambda_4, \nonumber \\
&~~~~~~\vdots \nonumber \\
\lambda_i&=\beta_{i+1}\lambda_{i+1}+\lambda_{i+2}, \nonumber \\ 
&~~~~~~\vdots \nonumber \\ 
\lambda_{l-1}&=\beta_l\lambda_l.
\label{chain}
\end{align}
where $\lambda_{l+1}=0$ for some integer $l,$ $\lambda_i,\beta_i$ are positive integers and $\lambda_i < \lambda_{i-1}$ for $i=1,2,\ldots,l$. The number of submatrices in the AIR matrix is $l+2$ and the size of each submatrix is shown using $\lambda_i,\beta_i,$  $i=0,1,2,\ldots,l.$ The submatrices are classified in to the following three types. 
%%%%%%%%%
\begin{itemize}
\item The first submatrix is the $I_{(K-D) \times (K-D)}$ matrix at the top of Fig. \ref{fig1} which is independent of $\lambda_i,\beta_i,$  $i=0,1,2,\ldots,l.$ This will be referred as the $I_{(K-D) \times (K-D)}$ matrix henceforth.
\item The set of matrices of the form $I_{\lambda_i \times \beta_i \lambda_i}$ for $i=0,2,4, \cdots$ (for all $i$ even) will be referred as the set of even-submatrices.
\item The set of matrices of the form $I_{\beta_i \lambda_i \times  \lambda_i}$ for $i=1,3,5, \cdots$ (for all $i$ odd) will be referred as the set of odd-submatrices. 
\end{itemize}
Note that the odd-submatrices are always "fat" and the even-submatrices are always "tall" including square matrices in both the sets. 
By the $i$-th submatrix is meant either an odd-submatrix or an even-submatrix for $ 0 \leq i \leq l.$  Also whenever  $\beta_0=0,$  the corresponding submatrix will not exist in the AIR matrix.  
%%%%%%%%%%%%%%%%%%%%%%
To  express the optimal scalar linear index code given by an AIR matrix in terms of Boolean expressions as well as to present a low complexity decoding it is required in the following  sections that the location of both the odd- and even-submatrices within the AIR matrix need to be identified. Towards this end, we define the following intervals. Let $R_0,R_1,R_2,\ldots,R_{\left\lfloor \frac{l}{2}\right\rfloor}+1$ be the intervals that will identify the rows of the submatrices  as given below:
\begin{itemize}
\item $R_0=[0:K-\lambda_0-1]$
\item $R_1=[K-\lambda_0:K-\lambda_2-1]$
\item $R_2=[K-\lambda_2:K-\lambda_4-1]$
\item []~~~~~~~~~~~~~~$\vdots$
\item $R_i=[K-\lambda_{2(i-1)}:K-\lambda_{2i}-1]$
\item []~~~~~~~~~~~~~~$\vdots$
\item $R_{\left\lfloor
\frac{l}{2}\right\rfloor}=[K-\lambda_{2(\left\lfloor
\frac{l}{2}\right\rfloor-1)}:K-\lambda_{2\left\lfloor
\frac{l}{2}\right\rfloor}-1]$
\item $R_{\left\lfloor \frac{l}{2}\right\rfloor+1}=[K-\lambda_{2\left\lfloor \frac{l}{2}\right\rfloor}:K-1]$,
\end{itemize} 
we have $R_0\cup R_1 \cup R_2 \cup \ldots \cup R_{\left\lfloor \frac{l}{2}\right\rfloor +1}=[0:K-1]$.

Let $C_0,C_1,\ldots,C_{\left\lceil \frac{l}{2}\right\rceil}$ be the intervals that will identify the columns of the submatrices  as given below:
\begin{itemize}

\item $C_0=[0:\beta_0\lambda_0-1]$ if $\beta_0 \geq 1$, else $C_0=\phi$
\item $C_1=[K-D-\lambda_1:K-D-\lambda_3-1]$
\item $C_2=[K-D-\lambda_3:K-D-\lambda_5-1]$
\item [] ~~~~~~~~~~~$\vdots$
\item $C_i=[K-D-\lambda_{2i-1}:K-D-\lambda_{2i+1}-1]$
\item []~~~~~~~~~~~~~~$\vdots$
\item $C_{\left\lceil
\frac{l}{2}\right\rceil-1}=[K-D-\lambda_{2\left\lceil
\frac{l}{2}\right\rceil-3}:K-D-\lambda_{2\left\lceil
\frac{l}{2}\right\rceil-1}-1]$
\item $C_{\left\lceil \frac{l}{2}\right\rceil}=[K-D-\lambda_{2\left\lceil \frac{l}{2}\right\rceil-1}:K-D-1]$
\end{itemize} 
\begin{align}
\label{column}
\end{align}
we have $C_0\cup C_1 \cup C_2 \cup \ldots \cup C_{\left\lceil \frac{l}{2}\right\rceil}=[0:K-D-1]$.

%%%%%%%%%%%%%%%%%%%%%%%%%%%%%%%%%%%%%
\section{Boolean expressions for optimal length index codes for SNC-SUICP}
\label{sec3}

Let $\mathbf{L}$ be a AIR matrix of size $K \times (K-D)$. A scalar linear index code of length $K-D$ generated by an AIR matrix is given by
\begin{align}
\label{code1}
 [c_0~c_1~\ldots~c_{K-D-1}]=[x_0~x_1~\ldots~x_{K-1}]\mathbf{L}=\sum_{k=0}^{K-1}x_kL_k
\end{align}
where $L_k$ is the $k$th row of $\mathbf{L}$ for $k\in[0:K-1]$. In the matrix $\mathbf{L}$, the $j$th column contains the coefficients used for mixing messages $x_0,x_1,\ldots,x_{K-1}$ to get the $j$th broadcast symbol.
% and the $k$th row $L_k$ contains the coefficients used for mixing message $x_k$ in the $K-D$ broadcast symbols.
%%%%%%%%%%%%%%%%%%%%%%%%%%%%%%%%%%%%%%%%%
%%%%%%%%%%%%%%%%%%%%%%%%%%%%%%%%%%%%%%%%%
\begin{theorem}
\label{thm1}
The broadcast symbol generated by the $k$th column of an AIR matrix for $k \in [0:K-D-1]$ can be expressed in Boolean expression as given at the top of the next page where $I_{S}$ denotes the indicator function, i.e., $I_{{S}}$ takes the value $1$ if the statement $S$ is true and $0$ otherwise. We will refer to this Boolean expression as the Boolean expression for $c_k.$ 
\end{theorem}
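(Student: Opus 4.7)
The plan is to compute $c_k$ directly from the defining equation \eqref{code1}, using the block description of the AIR matrix in Fig.~\ref{fig1}. Since $c_k=\sum_{j=0}^{K-1}x_j L_{j,k}$, proving the Boolean expression amounts to pinning down, for each fixed $k\in[0:K-D-1]$, exactly the set $S_k=\{j:L_{j,k}=1\}$ of row indices that contribute. The goal is then to re-express the membership tests $j\in S_k$ as the indicator functions appearing on the right-hand side, and this splits cleanly along the $l+2$ submatrices that make up $\mathbf{L}$.

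First I would dispose of the trivial contribution from the top $I_{(K-D)\times(K-D)}$ block: it contributes $x_k$ unconditionally, accounting for the leading term in the Boolean expression. Next I would walk through the submatrices below in order $i=0,1,\ldots,l$. For each $i$ I use the column partition $C_0\cup C_1\cup\cdots\cup C_{\lceil l/2\rceil}=[0:K-D-1]$ to decide, via the indicator $I_{\{k\in C_s\}}$, whether column $k$ even hits the $i$-th submatrix; if not, that submatrix gives no term. When it does, I invoke the single identity
\begin{equation*}
(\mathbf{I}_{m\times n})_{j,r}=1 \;\Longleftrightarrow\; j\bmod n = r,
\end{equation*}
applied to the corresponding block.

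For an even-submatrix $\mathbf{I}_{\lambda_i\times\beta_i\lambda_i}$ (with $i$ even), this identity says that every column of the block contains exactly one $1$, located at local row $k\bmod\lambda_i$, so the contribution is a single $x_j$ term whose global row index is obtained by adding the offset $K-\lambda_{2(i-1)}$ coming from the interval $R_{i/2}$. For an odd-submatrix $\mathbf{I}_{\beta_i\lambda_i\times\lambda_i}$ (with $i$ odd), transposing flips the role of rows and columns: column $k$ (measured relative to $C_{(i+1)/2}$) now contains $\beta_i$ ones spaced $\lambda_i$ apart, yielding a sum $\sum_{s=0}^{\beta_i-1}x_{\text{offset}+((k-k_0)\bmod\lambda_i)+s\lambda_i}$ guarded by $I_{\{k\in C_{(i+1)/2}\}}$. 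Collecting all $l+2$ contributions, and reducing the row indices modulo $K$ as per the paper's convention, recovers the claimed Boolean expression.

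The mathematical content of the argument is elementary: it is just two applications of the identity displayed above. The main obstacle is purely bookkeeping. The intervals $R_i$ and $C_s$ are defined through the recursive $\lambda_i,\beta_i$ chain \eqref{chain}, and to arrive at the clean form stated in the theorem every offset has to be traced back consistently to the global row coordinate $[0:K-1]$. In particular, one must verify that the row offset $K-\lambda_{2(i-1)}$ (for even $i$) and $K-\lambda_{2i-1}$-type expressions agree with those implicit in the $C_s$ partition, and that the boundary case $\beta_0=0$ (where $C_0=\emptyset$) silently drops the corresponding term. Once the indexing is made uniform, the Boolean expression for $c_k$ is obtained as a direct transcription of the block picture in Fig.~\ref{fig1}.
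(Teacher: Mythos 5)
Your plan is essentially the paper's own proof (Appendix A): there the authors fix $k\in C_i$ and enumerate the submatrices that the $k$th column of $\mathbf{L}$ passes through, listing the row positions of the $1$s and then encoding the case distinction with indicator functions; you perform the same enumeration organized per submatrix rather than per column interval, which is an equivalent bookkeeping of the identical computation.

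One concrete slip in your instantiation, however, would produce the wrong formula for $l\ge 3$: you guard the contribution of the odd-submatrix $\mathbf{I}_{\beta_{2j-1}\lambda_{2j-1}\times\lambda_{2j-1}}$ by $I_{\{k\in C_{j}\}}$, but that block occupies the \emph{last} $\lambda_{2j-1}$ columns of $\mathbf{L}$, which contain all of $C_j\cup C_{j+1}\cup\cdots\cup C_{\lceil l/2\rceil}$; only the even-submatrices have column span equal to a single interval $C_i$. This is exactly why the theorem's expression carries guards of the form $I_{\{k\in C_j\cup\cdots\cup C_{\lceil l/2\rceil}\}}$ on the sums $\sum_{s=1}^{\beta_{2j-1}}x_{(\cdot)}$ and guards $I_{\{k\in C_i\}}$ only on the single even-submatrix terms. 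With your guard, e.g.\ in Example \ref{ex6} the term $x_{34}=x_{24+\lambda_1}$ would be missing from $c_{24}$ even though $24\in C_2$. Once that union is restored (and the local column offset $k-k_0$ is used uniformly in the $\bmod$ reductions, as the paper does with $(k-K+D+\lambda_{2i-1})\bmod\lambda_{2i}$), your argument coincides with the paper's.
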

\begin{proof}
Proof is given in Appendix A.
\end{proof}
%%%%%%%%%%%%%%%%%%%%%%%%%%%%%%%%%%%%%%%%%%%%%%%
\begin{figure*}
%%%%%%%
\begin{align*}
\nonumber
c_k=x_k&+I_{\{k \in C_0\}}(x_{(K-D)+k~\text{\textit{mod}}~\lambda_0})
\\&
+I_{\{k \in C_1\cup C_2 \cup \ldots \cup C_{\left\lceil \frac{l}{2}\right\rceil}\}}\bigg(\sum_{j=1}^{\beta_1} x_{k+j\lambda_1}\bigg)
+I_{\{k \in C_1\}}(x_{ (K-\lambda_2)+(k-K+D+\lambda_1)~\text{\textit{mod}}~\lambda_2}) \\&
\nonumber
+I_{\{k \in C_2 \cup C_3 \cup \ldots \cup C_{\left\lceil \frac{l}{2}\right\rceil}\}}\bigg(\sum_{j=1}^{\beta_3} x_{k+\beta_1\lambda_1+j\lambda_3}\bigg)
+I_{\{k \in C_2\}}(x_{(K-\lambda_4)+(k-K+D+\lambda_3)~\text{\textit{mod}}~\lambda_4 }) \\&
\nonumber
+I_{\{k \in C_3 \cup \ldots \cup C_{\left\lceil \frac{l}{2}\right\rceil}\}}\bigg(\sum_{j=1}^{\beta_5} x_{k+\beta_1\lambda_1+\beta_3\lambda_3+j\lambda_5}\bigg)
+I_{\{k \in C_3\}}(x_{(K-\lambda_6)+(k-K+D+\lambda_5)~\text{\textit{mod}}~\lambda_6 })\\&
\nonumber
+\ldots \\&
\nonumber
+I_{\{k \in C_i \cup C_{i+1} \cup \ldots \cup C_{\left\lceil \frac{l}{2}\right\rceil}\}}
\bigg(\sum_{j=1}^{\beta_{2i-1}} x_{k+\beta_1\lambda_1+\beta_3\lambda_3+\ldots+\beta_{2i-3}\lambda_{2i-3}+j\lambda_{2i-1}}\bigg)
+I_{\{k \in C_i\}}(x_{(K-\lambda_{2i})+(k-K+D+\lambda_{2i-1})~\text{\textit{mod}}~\lambda_{2i} })\\&
\nonumber
+\ldots\\&
\nonumber
+I_{\{k \in C_{\left\lceil \frac{l}{2}\right\rceil-1}\}} 
(x_{(K-\lambda_{2\left\lceil \frac{l}{2}\right\rceil-2})+(k-K+D+\lambda_{2\left\lceil \frac{l}{2}\right\rceil-3})~\text{\textit{mod}}~\lambda_{2(\left\lceil \frac{l}{2}\right\rceil-1)} }) \\&
%%%%%%%%%%%%%%
%\begin{align}
%\label{code}
%\nonumber
+I_{\{k \in C_{\left\lceil \frac{l}{2}\right\rceil}\}} 
\nonumber
\bigg(\sum_{j=1}^{\beta_{2\left\lceil \frac{l}{2}\right\rceil-1}} x_{k+\beta_1\lambda_1+\ldots+\beta_{2\left\lceil \frac{l}{2}\right\rceil-3}\lambda_{2\left\lceil \frac{l}{2}\right\rceil-3}+j\lambda_{2\left\lceil \frac{l}{2}\right\rceil-1}}\bigg)\\&
%\nonumber
+I_{\{l ~ is ~ even\}} I_{\{k \in C_{\left\lceil \frac{l}{2}\right\rceil}\}}(x_{(K-\lambda_l)+(k-K+D+\lambda_{l-1})~\text{\textit{mod}}~\lambda_l }) 
\end{align*}
%%%%%%
\begin{equation}
\label{code}
\hspace*{-6.5cm}+I_{\{l~ is ~odd\}}I_{\{k \in C_{\left\lceil \frac{l}{2}\right\rceil}\}} 
\bigg(\sum_{j=1}^{\beta_{l}} x_{k+\beta_1\lambda_1+\beta_3\lambda_3+\ldots+\beta_{l-2}\lambda_{l-2}+
j\lambda_{l}}\bigg)
\end{equation}

\hrule
\end{figure*} 
%%%%%%%%%%%%%%Brought from new file %%%%%%%
%%%%%%%%%%%%%%%%%%%%%%%%%%%%%%%%%%%%%%%%%%%
\begin{example}
\label{ex1}
For the cases where  $l=1,$  the scalar linear index code is given by 
\begin{align*}
\nonumber
c_k=x_k&+I_{\{k \in C_0\}}(x_{k~\text{\textit{mod}}~\lambda_0)+(K-D)})\\&+I_{\{k \in C_1\}}\bigg(\sum_{j=1}^{\beta_1} x_{k+j\lambda_1}\bigg). 
\end{align*}
For example consider the case  $K=10$ and $D=3$. For this SNC-SUICP, we have $\lambda_0=3,\beta_0=2,\lambda_1=1,l=1,C_0=[0:5]$ and $C_1=\{6\}$. The scalar linear index code is given by 
\begin{align*}
\mathfrak{C}=\{&x_0+x_7,~~~x_1+x_8,~~~x_2+x_{9},~~~x_3+x_7,\\&x_4+x_8,~~~x_5+x_{9},
~~~x_6+x_7+x_8+x_{9}\}.
\end{align*}
The encoding matrix $\mathbf{L}_{10 \times 7}$ was shown in the previous section.
\end{example}
%%%%%%%%%%%%%%%%%%%%%%%%%%%%
\begin{example}
\label{ex2}
For the cases leading to  $l=2,$ the scalar linear index code is given by 
\begin{align*}
\nonumber
c_k=x_k&+I_{\{k \in C_0\}}(x_{k~\text{\textit{mod}}~\lambda_0+(K-D)})\\&+I_{\{k \in C_1\}}\bigg(\sum_{j=1}^{\beta_1} x_{k+j\lambda_1}\bigg)\\&+I_{\{k \in C_1\}}(x_{(k-K+D+\lambda_1)~\text{\textit{mod}}~(\lambda_2) + K-\lambda_2}).
\end{align*}

Consider the example  $K=17$ and $D=7$ which leads to $l=2.$. For this SNC-SUICP, we have $\lambda_0=7,\beta_0=1,\lambda_1=3,\beta_1=2,\lambda_2=1,C_0=[0:6]$ and $C_1=[7:9]$. The encoding matrix $\mathbf{L}_{17 \times 10}$ was shown in the previous section. The scalar linear index code in Boolean expressions is given by 
\begin{align*}
\mathfrak{C}=\{&x_0+x_{10},~~~x_1+x_{11},~~~x_2+x_{12},\\&x_3+x_{13},~~~x_4+x_{14},~~~x_5+x_{15},
\\&x_6+x_{16},~~~x_7+x_{10}+x_{13}+x_{16},\\&x_8+x_{11}+x_{14}+x_{16},~~~x_{9}+x_{12}+x_{15}+x_{16}\}.
\end{align*}
\end{example}
%%%%%%%%%%%%%%%%%%%%%%%%%%%%%%%%%%%%%
%%%%%%%%%%%%%%%%%%%%%%%%%%%%%%%%%%%%%
\section{Low Complexity Decoding of SNC-SUICP using Boolean Expressions}
\label{sec4}

In this section we present a low complexity decoding of SNC-SUICP using the Boolean expression description of the code obtained in the previous section. To be specific, we give closed form expressions for the message symbols of  every receiver as a function of broadcast symbols.  

\subsection{Properties of AIR matrices}
In this subsection,  we define some quantities related to an  AIR matrix that are used subsequently to obtain the closed form expression for decoding.  Let $\mathbf{L}$ be the AIR matrix of size $K \times (K-D)$. 

%For every element $\mathbf{L}(j,k)$, $j \in R_i$ for $i \in [0:\left\lfloor\frac{l}{2}\right\rfloor+1]$ and $k \in  C_i$ for $i \in [0:\left\lceil\frac{l}{2}\right\rceil]$. 
In the matrix $\mathbf{L}$, the element $\mathbf{L}(j,k)$ is present in one of the submatrices: $\mathbf{I}_{K-D}$ or $\mathbf{I}_{\beta_{2i+1}\lambda_{2i+1} \times \lambda_{2i+1}}$ for $i \in [0:\lceil \frac{l}{2}\rceil-1]$ or $\mathbf{I}_{\lambda_{2i} \times \beta_{2i}\lambda_{2i}}$ for $i \in [0:\left\lfloor\frac{l}{2}\right\rfloor]$. Let $(j_R,k_R)$ be the (row-column) indices of $\mathbf{L}(j,k)$ within the submatrix in which $\mathbf{L}(j,k)$ is present. Then, for a given $\mathbf{L}(j,k)$, the indices $j_R$ and $k_R$ are as given below.
\begin{itemize}
\item If $\mathbf{L}(j,k)$ is present in $\mathbf{I}_{K-D}$, then \\  $j_R=j$ and $k_R=k$.
\item If $\mathbf{L}(j,k)$ is present in $\mathbf{I}_{\lambda_{0} \times \beta_{0}\lambda_{0}}$, then \\
 $j_R=j~\text{\textit{mod}}~(K-D)$ and $k_R=k$. 
\item If $L(j,k)$ is present in $\mathbf{I}_{\beta_{2i+1}\lambda_{2i+1} \times \lambda_{2i+1}}$ for $i \in [0:\lceil \frac{l}{2}\rceil-1]$, then \\ $j_R=j~\text{\textit{mod}}~(K-\lambda_{2i})$ and $k_R=k~\text{\textit{mod}}~(K-D-\lambda_{2i+1})$.
\item If $L(j,k)$ is present in $\mathbf{I}_{\lambda_{2i} \times \beta_{2i}\lambda_{2i}}$ for $i \in [1:\left\lfloor\frac{l}{2}\right\rfloor]$, then $j_R=j~\text{\textit{mod}}~(K-\lambda_{2i})$ and $k_R=k~\text{\textit{mod}}~(K-D-\lambda_{2i-1})$. 
\end{itemize}

In Definition \ref{def1} below we define several distances between the $1$s present in an AIR matrix. These distances are used  in the following subsection to obtain a low complexity decoding.  Figure \ref{sfig4} is useful to visualize the distances defined.

\begin{definition}
%~~~~~~~~~~~~~~~~~~~~~~~~~~~~~~~~~~~~~~~~~~~~~~~~~~~~~~~~
\label{def1}
Let $\mathbf{L}$ be the AIR matrix of size $K \times (K-D)$.
\begin{itemize}
\item [\textbf{(i)}] For $k\in [0:K-D-1]$ we have  $\mathbf{L}(k,k)=1.$  Let $k^{\prime}$ be the maximum integer such that $k^{\prime} > k$ and $\mathbf{L}(k^{\prime},k)=1$. Then $k^{\prime}-k,$ denoted by $d_{down}(k),$  is called the down-distance of $\mathbf{L}(k,k)$.  
\item [\textbf{(ii)}] Let $\mathbf{L}(j,k)=1$ and $j \geq K-D$. Let $j^{\prime}$ be the maximum integer such that $j^{\prime} < j$ and $\mathbf{L}(j^{\prime},k)=1$. Then $j-j^{\prime},$ denoted by $d_{up}(j,k),$ is called the up-distance of $\mathbf{L}(j,k).$ 

\item [\textbf{(iii)}]  Let $\mathbf{L}(j,k)=1$ and $\mathbf{L}(j,k)\in \mathbf{I}_{ \lambda_{2i} \times \beta_{2i} \lambda_{2i}}$ for $i \in [0:\lfloor \frac{l}{2}\rfloor]$. Let $k^{\prime}$ be the minimum integer such that $k^{\prime} > k$ and $\mathbf{L}(j,k^{\prime})=1$. Then $k^{\prime}-k,$ denoted by $d_{right}(j,k),$  is called the right-distance of $\mathbf{L}(j,k).$  

\item [\textbf{(iv)}] For $k\in [0:K-D-\lambda_l-1]$, let $d_{right}(k+d_{down}(k),k)=\mu_k.$  Let the number of $1$s in the $(k+\mu_k)$th column of $\mathbf{L}$ below $\mathbf{L}(k+d_{down}(k),k+\mu_k)$ be $p_k$ and these are  at a distance of $t_{k,1},t_{k,2},\ldots,t_{k,p_k}~(t_{k,1}<t_{k,2}<\ldots <t_{k,p_k})$ from $\mathbf{L}(k+d_{down}(k),k+\mu_k)$. Then, $t_{k,r}$ is called the $r$-th  down-distance of $\mathbf{L}(k+d_{down}(k),k+\mu_k)$, for $1 \leq r \leq p_k.$
\end{itemize}
\end{definition}

Notice that if $L(k+d_{down}(k),k+\mu_k) \in \mathbf{I}_{\lambda_{2i} \times   \beta_{2i} \lambda_{2i}}$ in $\mathbf{L}$ for $i=0,1,2,\ldots,\lfloor \frac{l}{2}\rfloor$, then $p_k=0$.

%%%%%%%%%%%%%%%%%%%%%%%%%%%%%%%%%%%%
\begin{figure}
\centering
\includegraphics[scale=0.62]{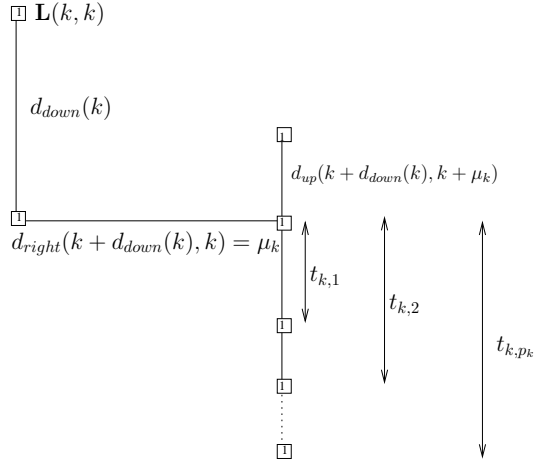}\\
\caption{Illustration of Definition \ref{def1}}
\label{sfig4}
\end{figure}
%%%%%%%%%%%%%%%%%%%%%%%%%%%%%%%%%%

\begin{lemma}
\label{lemma1}
Let $k \in C_i$ for $i \in [0:\lceil\frac{l}{2}\rceil]$. Let $k~\text{mod}~(K-D-\lambda_{2i-1})=c\lambda_{2i}+d$ for some positive integers $c$ and $d$ $(d<\lambda_{2i})$. The down distance is given by
\begin{align}
\label{mdd}
d_{down}(k)=D+\lambda_{2i+1}+(\beta_{2i}-1-c)\lambda_{2i}.
\end{align}
\end{lemma}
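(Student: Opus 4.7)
The plan is to pinpoint the bottom-most $1$ in column $k$ of $\mathbf{L}$ by tracking the submatrix placement of Algorithm~1, read off its row index, and subtract $k$.

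First I would identify the relevant submatrix. The chain relation $\lambda_{2i-1}=\beta_{2i}\lambda_{2i}+\lambda_{2i+1}$ gives $|C_i|=\lambda_{2i-1}-\lambda_{2i+1}=\beta_{2i}\lambda_{2i}$, which matches the column count of the even-submatrix $\mathbf{I}_{\lambda_{2i}\times\beta_{2i}\lambda_{2i}}$, so $C_i$ is precisely its column range. Beyond the top $\mathbf{I}_{(K-D)\times(K-D)}$, the other submatrices whose columns meet $C_i$ are the odd-submatrices $\mathbf{I}_{\beta_{2j-1}\lambda_{2j-1}\times\lambda_{2j-1}}$ for $1\le j\le i$ (their columns $[K-D-\lambda_{2j-1}:K-D-1]$ contain $C_i$ since $\lambda_{2j-1}>\lambda_{2i+1}$); every remaining submatrix has columns inside $[K-D-\lambda_{2i+1}:K-D-1]$, disjoint from $C_i$.

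Next I would argue that $\mathbf{I}_{\lambda_{2i}\times\beta_{2i}\lambda_{2i}}$ carries the bottom-most $1$ in column $k$. An induction on the iterations of Algorithm~1 shows that each even-submatrix $\mathbf{I}_{\lambda_{2j}\times\beta_{2j}\lambda_{2j}}$ is placed bottom-aligned at absolute rows $[K-\lambda_{2j}:K-1]$ (Step~2 fills the leftmost columns of the currently unfilled region, which is anchored at the bottom-right corner of $\mathbf{L}$), whereas each odd-submatrix $\mathbf{I}_{\beta_{2j-1}\lambda_{2j-1}\times\lambda_{2j-1}}$ sits at rows $[K-\lambda_{2(j-1)}:K-\lambda_{2j}-1]$ (Step~1 fills the top of that region). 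Since $\lambda_0>\lambda_2>\cdots>\lambda_{2i}$, the submatrix $\mathbf{I}_{\lambda_{2i}\times\beta_{2i}\lambda_{2i}}$ extends strictly below every other submatrix that has a $1$ in column $k$, so its $1$ is the lowest.

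Finally, writing $k-(K-D-\lambda_{2i-1})=c\lambda_{2i}+d$ identifies column $k$ as local column $c\lambda_{2i}+d$ of $\mathbf{I}_{\lambda_{2i}\times\beta_{2i}\lambda_{2i}}$; because this submatrix is $\beta_{2i}$ side-by-side copies of $\mathbf{I}_{\lambda_{2i}}$, that column has its unique $1$ at local row $d$, hence at absolute row $K-\lambda_{2i}+d$. Subtracting $k$ yields
\[
d_{down}(k)=(K-\lambda_{2i}+d)-k = D+\lambda_{2i-1}-(c+1)\lambda_{2i},
\]
and substituting $\lambda_{2i-1}=\beta_{2i}\lambda_{2i}+\lambda_{2i+1}$ from \eqref{chain} produces \eqref{mdd}. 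The main obstacle is the inductive bookkeeping of the submatrix row ranges in the second step; once that invariant is in place, everything else is direct algebra from \eqref{chain}. The degenerate boundaries ($\beta_0=0$, or $l$ odd with $i=\lceil l/2\rceil$ so that $\lambda_{2i}=0$ by convention) reduce to direct inspection of the last odd-submatrix $\mathbf{I}_{\beta_l\lambda_l\times\lambda_l}$ and yield $d_{down}(k)=D$, which is consistent with the formula under those conventions.
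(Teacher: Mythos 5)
Your proof is correct and follows essentially the same route as the paper's Appendix B: locate the bottom-most $1$ of column $k$ in the bottom-aligned even-submatrix $\mathbf{I}_{\lambda_{2i}\times\beta_{2i}\lambda_{2i}}$, read off its row from the local column index $c\lambda_{2i}+d$, subtract $k$ to get $D+\lambda_{2i-1}-(c+1)\lambda_{2i}$, and apply the chain relation $\lambda_{2i-1}=\beta_{2i}\lambda_{2i}+\lambda_{2i+1}$. The only difference is presentational — the paper reads the vertical span off a figure as $d_1+d_2+d_3$ and treats the $l$ odd, $k\in C_{\lceil l/2\rceil}$ situation as a separate explicit case, whereas you compute the absolute row directly from the bottom-alignment invariant and fold the degenerate case into the conventions $\lambda_{2i}=\lambda_{2i+1}=0$.
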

\begin{proof}
Proof is given in Appendix B.
\end{proof}

\begin{lemma}
\label{lemma2}
The up-distance of $\mathbf{L}(j,k)$ is as given below.
\begin{itemize}
\item If $\mathbf{L}(j,k) \in \mathbf{I}_{\beta_{2i+1} \lambda_{2i+1} \times \lambda_{2i+1}}$  for $i \in [0:\lceil \frac{l}{2}\rceil-1]$, then $d_{up}(j,k)$ is $\lambda_{2i+1}$. 
\item If $\mathbf{L}(j,k) \in \mathbf{I}_{\lambda_{2i} \times  \beta_{2i} \lambda_{2i}}$  for $i\in [0:\lfloor \frac{l}{2}\rfloor]$ and $k_R=c\lambda_{2i}+d$ for some positive integer $c,d~(d < \lambda_{2i})$, then $d_{up}(j,k)$ is $\lambda_{2i-1}-c\lambda_{2i}$. 
\end{itemize}
\end{lemma}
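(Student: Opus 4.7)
The plan is a direct case analysis on which submatrix of $\mathbf{L}$ the nonzero entry $(j,k)$ inhabits and, in the odd-submatrix case, on whether this $1$ is the topmost $1$ of its column inside that submatrix. Throughout I would invoke the recursion $\lambda_{i-1}=\beta_i\lambda_i+\lambda_{i+1}$ from \eqref{chain}, extended by the convention $\lambda_{-1}:=K-D$ (consistent with $K-D=\beta_0\lambda_0+\lambda_1$) so that the $i=0$ boundary of Case 2 folds into the general statement.

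For Case 1, the odd submatrix $\mathbf{I}_{\beta_{2i+1}\lambda_{2i+1}\times\lambda_{2i+1}}$ is, by Algorithm I, a vertical stack of $\beta_{2i+1}$ copies of $\mathbf{I}_{\lambda_{2i+1}}$, so consecutive $1$s in any fixed column of this block are exactly $\lambda_{2i+1}$ rows apart. If $\mathbf{L}(j,k)$ is not the topmost $1$ in its column inside this block, the claim is immediate. Otherwise $j=K-\lambda_{2i}+k_R$, and I would identify the submatrix lying directly above: for $i=0$ this is the top $\mathbf{I}_{K-D}$, whose $1$ in column $k$ sits at row $k$; for $i\geq 1$ this is the preceding odd submatrix $\mathbf{I}_{\beta_{2i-1}\lambda_{2i-1}\times\lambda_{2i-1}}$, because its column range $[K-D-\lambda_{2i-1}:K-D-1]$ contains the current block's column range $[K-D-\lambda_{2i+1}:K-D-1]$, and its bottommost $1$ in column $k$ sits at absolute row $K-\lambda_{2i-2}+(\lambda_{2i-1}-\lambda_{2i+1}+k_R)+(\beta_{2i-1}-1)\lambda_{2i-1}$. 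In either case, substituting the telescoping identity $\beta_{2i-1}\lambda_{2i-1}=\lambda_{2i-2}-\lambda_{2i}$ collapses the row difference to $\lambda_{2i+1}$.

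For Case 2, the even block $\mathbf{I}_{\lambda_{2i}\times\beta_{2i}\lambda_{2i}}$ is a horizontal concatenation of $\beta_{2i}$ copies of $\mathbf{I}_{\lambda_{2i}}$, so each of its columns carries exactly one $1$; hence $\mathbf{L}(j,k)$ is automatically the only $1$ in its column within the block, and writing $k_R=c\lambda_{2i}+d$ places it at $j=K-\lambda_{2i}+d$. The preceding block in these columns is $\mathbf{I}_{K-D}$ when $i=0$ (with its $1$ in column $k$ at row $k=c\lambda_0+d$) and the odd submatrix $\mathbf{I}_{\beta_{2i-1}\lambda_{2i-1}\times\lambda_{2i-1}}$ when $i\geq 1$ (with bottommost $1$ at absolute row $K-\lambda_{2i-2}+k_R+(\beta_{2i-1}-1)\lambda_{2i-1}$). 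Subtracting and again using $\beta_{2i-1}\lambda_{2i-1}=\lambda_{2i-2}-\lambda_{2i}$ yields $d_{up}(j,k)=\lambda_{2i-1}-c\lambda_{2i}$, which under the convention $\lambda_{-1}=K-D$ specializes to the $i=0$ answer $K-D-c\lambda_0$.

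The main obstacle I anticipate is purely bookkeeping: identifying, for every submatrix, the block that sits immediately overhead in the relevant column range, and tracking row/column offsets so that the identity $\lambda_{2i-2}-\lambda_{2i}=\beta_{2i-1}\lambda_{2i-1}$ cleanly simplifies the raw row-difference to $\lambda_{2i+1}$ in Case 1 and to $\lambda_{2i-1}-c\lambda_{2i}$ in Case 2. A secondary subtlety is verifying that the degenerate subcase $\beta_0=0$, in which $\mathbf{I}_{\lambda_0\times\beta_0\lambda_0}$ is absent from $\mathbf{L}$, is absorbed uniformly into this template together with the $\lambda_{-1}=K-D$ convention.
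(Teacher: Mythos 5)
Your proposal is correct and follows essentially the same route as the paper's Appendix C: a case split on whether $\mathbf{L}(j,k)$ lies in an odd- or even-submatrix, locating the nearest $1$ above via the block immediately overhead, and collapsing the row difference with the identity $\lambda_{2i-2}-\lambda_{2i}=\beta_{2i-1}\lambda_{2i-1}$ (the paper does this via a figure-based decomposition $d_{up}=d_1+d_2$ rather than absolute row indices, and it dispatches your Case 1 entirely by appeal to the stacked-identity structure and a figure). Your explicit treatment of the topmost-$1$ subcase in the odd blocks and of the $\beta_0=0$ degeneracy is slightly more careful than the paper's, but it is the same argument.
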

\begin{proof}
Proof is given in Appendix C.
\end{proof}

\begin{lemma}
\label{lemma3}
The right-distance of $\mathbf{L}(j,k)$ is as given below.
\begin{itemize}
\item If $k_R \in [0:(\beta_{2i}-1)\lambda_{2i}-1]$ for $i \in [0:\lfloor \frac{l}{2}\rfloor]$, then $d_{right}(j,k)$ is $\lambda_{2i}$. 
\item If $k_R \in [(\beta_{2i}-1)\lambda_{2i}:\beta_{2i} \lambda_{2i}-1]$ for $i \in [0:\lfloor \frac{l}{2}\rfloor-1]$, then $d_{right}(j,k)$ depends on $j_R$. If $j_R=c\lambda_{2i+1}+d$ for some positive integers $c,d~(d<\lambda_{2i+1})$, then  $d_{right}(j,k)$ is $\lambda_{2i}-c\lambda_{2i+1}$. 
\end{itemize}
\end{lemma}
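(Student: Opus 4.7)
The plan is to work in the local coordinate system of the even-submatrix $\mathbf{I}_{\lambda_{2i}\times\beta_{2i}\lambda_{2i}}$ containing $\mathbf{L}(j,k)$, and to locate the next $1$ appearing in row $j$ as we sweep rightward. Since this submatrix is the horizontal concatenation of $\beta_{2i}$ copies of $\mathbf{I}_{\lambda_{2i}}$, the hypothesis $\mathbf{L}(j,k)=1$ forces $j_R = k_R \bmod \lambda_{2i}$, and writing $k_R = b\lambda_{2i} + j_R$ with $b \in [0:\beta_{2i}-1]$, Case~1 amounts to $b \le \beta_{2i}-2$ while Case~2 amounts to $b = \beta_{2i}-1$.

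For Case~1, the next $1$ of the submatrix in local row $j_R$ sits at local column $(b+1)\lambda_{2i}+j_R = k_R+\lambda_{2i}$, hence at global column $k+\lambda_{2i}$. The substantive step is to show that no row-$j$ $1$ from any other submatrix intrudes in columns $(k,k+\lambda_{2i})$: this interval lies entirely inside even-submatrix $2i$'s column range, and the only submatrices whose row range contains $j \in [K-\lambda_{2i}:K-1]$ have column ranges either among $\{C_{i'}: i'\neq i\}$ (disjoint from $C_i$ since the $C_{i'}$ partition $[0:K-D-1]$) or belonging to odd-submatrix $2i-1$, whose row range $[K-\lambda_{2i-2}:K-\lambda_{2i}-1]$ lies above $K-\lambda_{2i}$. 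This yields $d_{right}(j,k)=\lambda_{2i}$.

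For Case~2, even-submatrix $2i$ contributes no further $1$s in row $j_R$ after $k$, so the next $1$ must arise in a submatrix whose column range begins at $K-D-\lambda_{2i+1}$, namely odd-submatrix $2i+1$ or even-submatrix $2i+2$. Writing $j_R = c\lambda_{2i+1}+d$ with $d<\lambda_{2i+1}$, the value $c\in[0:\beta_{2i+1}]$ decides the split: $c<\beta_{2i+1}$ places $j$ in odd-submatrix $2i+1$ (a vertical stack of $\beta_{2i+1}$ copies of $\mathbf{I}_{\lambda_{2i+1}}$), in whose local row $j_R$ the $1$ lies at local column $d$; $c=\beta_{2i+1}$ places $j$ in even-submatrix $2i+2$ (a horizontal stack of $\beta_{2i+2}$ copies of $\mathbf{I}_{\lambda_{2i+2}}$), in whose local row $d$ the leftmost $1$ also lies at local column $d$. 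In both subcases the next $1$ in row $j$ is at global column $(K-D-\lambda_{2i+1})+d$; then using $k_R = (\beta_{2i}-1)\lambda_{2i}+j_R$ together with $\beta_{2i}\lambda_{2i}=\lambda_{2i-1}-\lambda_{2i+1}$ for $i\ge 1$ (or $\beta_0\lambda_0 = K-D-\lambda_1$ for $i=0$) yields $k = K-D-\lambda_{2i+1}-\lambda_{2i}+j_R$, so $d_{right}(j,k) = d+\lambda_{2i}-j_R = \lambda_{2i}-c\lambda_{2i+1}$, as claimed.

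The main obstacle is the structural bookkeeping needed to rule out interfering $1$s in row $j$ within the target column intervals in both cases. Once it is established, from the explicit row-range formulas $[K-\lambda_{2i'}:K-1]$ for even-submatrices and $[K-\lambda_{2(i'-1)}:K-\lambda_{2i'}-1]$ for odd-submatrices together with the column partition $C_0,\ldots,C_{\lceil l/2\rceil}$, that at most one submatrix contributes $1$s to row $j$ inside the relevant column interval, the remaining work reduces to the recurrence manipulation $\lambda_{j-1}=\beta_j\lambda_j+\lambda_{j+1}$ invoked above.
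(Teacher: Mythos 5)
Your proof is correct and follows essentially the same route as the paper: a direct computation on the block structure of the AIR matrix combined with the recurrence $\lambda_{j-1}=\beta_j\lambda_j+\lambda_{j+1}$, with the paper simply reading the distances $d_1,d_2$ off a figure. If anything you are more careful than the paper, whose Case~(ii) asserts that the next $1$ always lies in $\mathbf{I}_{\beta_{2i+1}\lambda_{2i+1}\times\lambda_{2i+1}}$, whereas you correctly split on $c<\beta_{2i+1}$ versus $c=\beta_{2i+1}$ (the latter placing the next $1$ in the even-submatrix $2i+2$) and verify that both subcases yield the same column $(K-D-\lambda_{2i+1})+d$.
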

\begin{proof}
Proof is given in Appendix D.
\end{proof}
%%%%%%%%%%%%%%%%%%%%%%%%%%%%%%%%%%%%
\begin{example}
\label{ex3}
Consider the AIR matrix of size $13 \times 10$ given shown below. For this matrix  $D=3, ,\beta_0=3,\lambda_1=1,\beta_1=3$ and $l=1$.  

\arraycolsep=1.0pt
\setlength\extrarowheight{-2pt}
{
$$\mathbf{L}_{13 \times 10}=\left[
\begin{array}{cccccccccc}
{\color{red}\textbf{1}} & 0 & 0 & 0 & 0 & 0 & 0 & 0 & 0 & 0\\
0 & {\color{red}\textbf{1}} & 0 & 0 & 0 & 0 & 0 & 0 & 0 & 0\\
0 & 0 & {\color{red}\textbf{1}} & 0 & 0 & 0 & 0 & 0 & 0 & 0\\
0 & 0 & 0 & {\color{green}\textbf{1}} & 0 & 0 & 0 & 0 & 0 & 0\\
0 & 0 & 0 & 0 & {\color{green}\textbf{1}} & 0 & 0 & 0 & 0 & 0\\
0 & 0 & 0 & 0 & 0 & {\color{green}\textbf{1}} & 0 & 0 & 0 & 0\\
0 & 0 & 0 & 0 & 0 & 0 & {\color{blue}\textbf{1}} & 0 & 0 & 0\\
0 & 0 & 0 & 0 & 0 & 0 & 0 & {\color{blue}\textbf{1}} & 0 & 0\\
0 & 0 & 0 & 0 & 0 & 0 & 0 & 0 & {\color{blue}\textbf{1}} & 0\\
0 & 0 & 0 & 0 & 0 & 0 & 0 & 0 & 0 & {\color{red}\textbf{1}}\\
\hline
{\color{red}\textbf{1}} & 0 & 0 & {\color{green}\textbf{1}} & 0 & 0 & {\color{blue}\textbf{1}} & 0 & 0 &\vline 1\\
0 & {\color{red}\textbf{1}} & 0 & 0 & {\color{green}\textbf{1}} & 0 & 0 & {\color{blue}\textbf{1}} & 0 &\vline 1\\
0 & 0 & {\color{red}\textbf{1}} & 0 & 0 & {\color{green}\textbf{1}} & 0 & 0 & {\color{blue}\textbf{1}} &\vline {\color{red}\textbf{1}}\\
 \end{array}
\right]$$
}

The  down-distances of the elements in the matrix $\mathbf{L}_{13 \times 10}$ are given below.
\begin{align*}
&d_{down}(0)=d_{down}(1)=d_{down}(2)=10,\\&
d_{down}(3)=7,~d_{down}(4)=7,~~d_{down}(5)=7,\\&
d_{down}(6)=4,~d_{down}(7)=4,~~d_{down}(8)=4,\\&
d_{down}(9)=3.
\end{align*}

The up-distances are given below.
\begin{align*}
&d_{up}(10,0)=d_{up}(11,1)=d_{up}(12,2)=9,\\&
d_{up}(10,3)=d_{up}(11,4)=d_{up}(12,5)=6,\\&
d_{up}(10,6)=d_{up}(11,7)=d_{up}(12,8)=3,\\&
d_{up}(10,9)=d_{up}(11,9)=d_{up}(12,9)=1.
\end{align*}

The right-distances are given below.
\begin{align*}
&d_{right}(10,0)=3,d_{right}(10,3)=3,d_{right}(10,6)=3,\\&
d_{right}(11,1)=3,d_{right}(11,4)=3,d_{right}(11,7)=2,\\&
d_{right}(12,2)=3,d_{right}(12,5)=3,d_{right}(12,8)=1.
%d_{right}(12,9)~\text{is~not~defined}.
\end{align*}

We have $p_k=0~\text{for}~k=0,1,2,3,4,5,8,9.$ For $k=6$, we have $p_k=2,t_{6,1}=1$ and $t_{6,2}=2$. For $k=7$, we have $p_k=1$ and $t_{7,1}=1.$
\end{example}
%%%%%%%%%%%%%%%%%%%%%%%%%%%%%%%%%%%%%%%%%%
\begin{example}
\label{ex4}
Consider the AIR matrix of size $13 \times 3$ given below. In this matrix $K=13, D=10, K-D=3,\beta_0=0,\lambda_1=3,\beta_1=3,\lambda_2=1,\beta_2=3$ and $l=2$. 
\arraycolsep=1.0pt
\setlength\extrarowheight{-2pt}
{
$$\mathbf{L}_{13 \times 3}=\left[
\begin{array}{cccccccccc}
1 & 0 & 0\\
0 & 1 & 0\\
0 & 0 & 1\\
\hline
1 & 0 & 0\\
0 & 1 & 0\\
0 & 0 & 1\\
1 & 0 & 0\\
0 & 1 & 0\\
0 & 0 & 1\\
1 & 0 & 0\\
0 & 1 & 0\\
0 & 0 & 1\\
\hline
1 & 1 & 1\\
 \end{array}
\right]$$
}

The down-distances are given below.
\begin{align*}
d_{down}(0)=12,d_{down}(1)=11,d_{down}(2)=10.
\end{align*}

The up-distances are given below.
\begin{align*}
&d_{up}(3,0)=d_{up}(4,1)=d_{up}(5,2)=3,\\&
d_{up}(6,0)=d_{up}(7,1)=d_{up}(8,2)=3,\\&
d_{up}(9,0)=d_{up}(10,1)=d_{up}(11,2)=3,\\&
d_{up}(12,0)=3,d_{up}(12,1)=2,d_{up}(12,2)=1.
\end{align*}

The right-distances given below.
\begin{align*}
d_{right}(12,0)=d_{right}(12,1)=1.
\end{align*}

We have $p_k=0$ for $k \in [0:12]$.
\end{example}
%%%%%%%%%%%%%%%%%%%%%%%%%%%%%%%%%%%%%%%%%%%%%%%%%%%%%%%%%%%%%%%%%%%%%%%%%%%%%%%%%%%%%%%

%%%%%%%%%%%%%%%%%%%%%%%%%%%%%%%%%%%%%%%%%%%%%%%%%%%%%%%%%%%%%%%%%%%%%%%%%%%%%%%%%%%%%%%
\subsection{Low Complexity decoding for One-Sided SNC-SUICP}
In this subsection, we give a low complexity decoding procedure for one-sided SNC-SUICP index codes by using AIR matrices. 
Henceforth, for the sake of notational convenience we denote $K-D$ by $\lambda_{-1}$ also.

It turns out that the interval $C_{i}$ defined in \eqref{column} for $i=0,1,2,\ldots,\left\lceil\frac{l}{2}\right\rceil$
needs to be partitioned into two  as $C_{i} =D_{i}\cup E_{i}$ as given below for the purpose of obtaining expressions for decoding for the receivers with indices falling in $C_i.$ This is because the right-distances for entries in the submatrices corresponding to $C_i$ are different for $D_i$ and $E_i.$    
%Let $D_i$ and $E_i$ be the sets as given below.
\begin{align}
\label{setd}
\nonumber
&D_{i}=[K-D-\lambda_{2i-1}:K-D-\lambda_{2i-1}+(\beta_{2i}-1)\lambda_{2i}-1]\\&
E_i=[K-D-\lambda_{2i-1}+(\beta_{2i}-1)\lambda_{2i}:K-D-\lambda_{2i+1}-1].
\end{align}
%We have $C_{i} =D_{i}\cup E_{i},$
%\begin{align}
%\label{fact12}
%C_{i} =D_{i}\cup E_{i}, 
%\end{align}
%where $C_{i}$ is defined in \eqref{column} and $i=0,1,2,\ldots,\left\lceil\frac{l}{2}\right\rceil$. 
%We have $C_0 \cup C_1 \cup \ldots \cup C_{\lceil \frac{l}{2}\rceil}=[0:K-D-1]$.
%Henceforth, for the sake of notational convenience we denote $K-D$ by $\lambda_{-1}$ also.  

%%%%%%%%%%%%%%%%%%%%%%%%%%%%%%%%%%
%%%%%%%%%%%%%%%%%%%%%%%%%%%%%%%%%%%%%%%%%%%%
\begin{theorem}
\label{thm2}
Let $\mathbf{L}$ be the AIR matrix of size $K \times (K-D)$. Let $[c_0~c_1~\ldots~c_{K-D-1}]$ be the optimal scalar linear index code generated by $\mathbf{L}$ for SNC-SUICP. Let $d_{right}(k+d_{down}(k),k)=\mu_k.$ Also let $t_{k,r}$ be the $r$-th  down distance of $\mathbf{L}(k+d_{down}(k),k+\mu_k),$ where $1 \leq r \leq p_k.$  Let $D_i$ and $E_i$ be the sets as given in \eqref{setd}. Then, the receiver $R_k$ can decode its wanted message symbol $x_k$ as given below.
%%%%%%%%%%%%%%%%%%%
\begin{itemize}
\item [\textbf{(i)}] If $k \in D_i$ for $i \in [0:\lceil \frac{l}{2}\rceil]$, then $R_k$ can decode $x_k$ by using the two broadcast symbols $c_k, c_{k+\mu_k}$ and the available side-information. 
That is 
\begin{align}
\label{case1}
x_k=&c_k+c_{k+\mu_k}+\underbrace{\nu_k+\nu_{k+\mu_k}}_{\text{side-information~terms}},
\end{align}
where 
\begin{align}
\label{case11}
\nu_k=\sum_{z=1}^{i}\sum_{j=1}^{\beta_{2z-1}}x_{k+D-\lambda_{2z-2}+j\lambda_{2z-1}},
\end{align}
and
\begin{align}
\label{case12}
\nu_{k+\mu_k}=x_{k+\mu_k}+\sum_{z=1}^{i}\sum_{j=1}^{\beta_{2z-1}}x_{k+t_{k,j}+D-\lambda_{2z-2}+j\lambda_{2z-1}}.
\end{align}
%%%%%%%%%%%%
\item [\textbf{(ii)}] If $k \in E_i$ for $i \in [0:\lceil \frac{l}{2}\rceil-1]$, then $R_k$ can decode $x_k$ by using the broadcast symbols $c_k, c_{k+\mu_k},c_{k+t_{k,1}},\ldots,c_{k+t_{k,p_k}}$ and the side-information available. That is 
\begin{align}
\label{case2}
\nonumber
x_k=&c_k+c_{k+\mu_k}+c_{k+t_{k,1}}+c_{k+t_{k,2}}+\ldots+c_{k+t_{k,p_k}}\\&+\underbrace{\nu_k+\nu_{k+\mu_k}+\nu_{k+t_{k,1}}+\nu_{t_{k,2}}+\ldots+
\nu_{k+t_{k,p_k}}}_{\text{side-information~terms}},
\end{align}
where 
\begin{align}
\label{case21}
\nonumber
&\nu_k=\sum_{z=1}^{i}\sum_{j=1}^{\beta_{2z-1}}x_{k+D-\lambda_{2z-2}+j\lambda_{2z-1}},\\&
\nonumber
\nu_{k+t_{k,j}}=x_{k+t_{k,j}}+\sum_{z=1}^{i}\sum_{j=1}^{\beta_{2z-1}}x_{k+t_{k,j}+D-\lambda_{2z-2}+j\lambda_{2z-1}}~\\&
\nonumber
\text{for}~j\in[1:p_k], \text{and} \\&
\nonumber
\nu_{k+\mu_k}=x_{k+\mu_k}+\sum_{z=1}^{i}\sum_{j=1}^{\beta_{2z-1}}x_{k+t_{k,j}+D-\lambda_{2z-2}+j\lambda_{2z-1}}\\&
\nonumber
+\sum_{j=1}^{c}x_{k+t_{k,j}+D-\lambda_{2i}+j\lambda_{2i+1}}~\text{and}~c\in \mathbb{Z}^+ \text{such~that}\\& ~k_R=(\beta_{2i}-1)\lambda_{2i}+c\lambda_{2i+1}+d~\text{and}~d<\lambda_{2i+1}.
\end{align}
%%%%%%%%%%%%%
\item [\textbf{(iii)}] If $k \in [K-D-\lambda_l:K-D-1]$, then $R_k$ can decode $x_k$ from the broadcast symbol $c_k$ and the side-information. That is 
\begin{align}
\label{case3}
x_k=&c_k+\nu_k,
\end{align}
where the side-information term
\begin{align*}
&\nu_k=\sum_{z=1}^{\lceil \frac{l}{2}\rceil}\sum_{j=1}^{\beta_{2z-1}}x_{k+D-\lambda_{2z-2}+j\lambda_{2z-1}}~\text{if~$l$~is~odd, and}\\&
\nu_k=x_{k+D}+\sum_{z=1}^{\lceil \frac{l}{2}\rceil}\sum_{j=1}^{\beta_{2z-1}}x_{k+D-\lambda_{2z-2}+j\lambda_{2z-1}}~\text{if~$l$~is~even}.
\end{align*}
%%%%%%%%%%%%%%
\item [\textbf{(iv)}] If $k \in [K-D:K-1]$, then $R_k$ can decode $x_k$ from the broadcast symbol $c_{k~\text{mod}~(K-D)}$. Let $k~\text{mod}~(K-D)=k^{\prime}$. Receiver $R_k$ can decode $x_k$ as 
\begin{align}
\label{case4}
 x_k=c_{k^{\prime}}+\nu_{k^{\prime}},
 \end{align}
where the side-information term $\nu_{k^{\prime}}$ is the exclusive OR of all message symbols present in $c_{k^{\prime}}$ excluding $x_k$.

%given below.
%\begin{itemize}
%\item If $D < \left\lfloor\frac{K}{2}\right\rfloor$, then $\nu_{k^{\prime}}=x_{k+D}$.
%\item If $D \geq \left\lfloor\frac{K}{2}\right\rfloor$ and $k\in [K-\lambda_2:K-1]$, then $\nu_{k^{\prime}}=x_{k^\prime}+\sum_{j=1}^{\beta_{1}}x_{k^\prime+j\lambda_{1}}$.
%\item If $D \geq \left\lfloor\frac{K}{2}\right\rfloor$, $k\notin [K-\lambda_2:K-1]$, $l$ even and $k^{\prime} \in D_i$ for $i \in [1:\lceil \frac{l}{2}\rceil]$, then 
%\begin{align}
%\label{case42}
%\nonumber
%\nu_{k^{\prime}}=&\sum_{j=1~\text{and}~j \neq c}^{\beta_{1}}x_{k^\prime+j\lambda_{1}}\\&
%\nonumber
%+\sum_{z=2}^{i}\sum_{j=1}^{\beta_{2z-1}}x_{k^\prime+D-\lambda_{2z-2}+j\lambda_{2z-1}}\\&+x_{(K-\lambda_{2i})+(k^\prime-K+D+\lambda_{2i-1})~\text{\textit{mod}}~\lambda_{2i} },
%\end{align}
%where $c \in \mathbb{Z}^+$ such that $k=(c+1)(K-D)+d$ for $d\in \mathbb{Z}^+$ and $d <K-D$.
%\item If $D \geq \left\lfloor\frac{K}{2}\right\rfloor$, $k\notin [K-\lambda_2:K-1]$, $l$ odd and $k^{\prime} \in E_i$ for $i \in [1:\lceil \frac{l}{2}\rceil-1]$, then $\nu_k$ is given in \eqref{case42}.
%\item If $D \geq \left\lfloor\frac{K}{2}\right\rfloor$, $k\notin [K-\lambda_2:K-1]$,$l$ odd and $k^{\prime} \in E_i$ for $i =\lceil \frac{l}{2}\rceil$, then 
%\begin{align}
%\label{case43}
%\nonumber
%\nu_{k^{\prime}}=&\sum_{j=1~\text{and}~j \neq c}^{\beta_{1}}x_{k^\prime+j\lambda_{1}}\\&+\sum_{z=2}^{\lceil \frac{l}{2}\rceil}\sum_{j=1}^{\beta_{2z-1}}x_{k^\prime+D-\lambda_{2z-2}+j\lambda_{2z-1}},
%\end{align}
%where $c \in \mathbb{Z}^+$ such that $k=(c+1)(K-D)+d$ for $d\in \mathbb{Z}^+$ and $d <K-D$.
%\end{itemize}
\end{itemize}
\end{theorem}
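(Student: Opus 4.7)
My plan is to prove Theorem~\ref{thm2} by direct verification, interpreting each displayed equation as a statement that a specific $\mathbb{F}_2$-sum of columns of $\mathbf{L}$ has a $1$ in row $k$ and all its other $1$s in rows lying in $\{k+1,\ldots,k+D\}\bmod K$, i.e., in the side-information of $R_k$. With that reformulation, the four cases reduce to enumerating, for each relevant column, the positions of its $1$s (using Lemmas~\ref{lemma1}--\ref{lemma3}) and checking which $1$s cancel and which land in the side-information window.

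I would begin with the short cases. Case (iv) is essentially definitional: for $k\in[K-D:K-1]$ the row $k$ sits inside $\mathbf{I}_{\lambda_0\times\beta_0\lambda_0}$ and contributes a single $1$ in column $k'=k\bmod(K-D)$, while the remaining $1$s of column $k'$ are enumerated by $d_{\rm down}(k')$ together with the tail $t_{k',r}$; the cyclic structure and Lemma~\ref{lemma1} show these row indices all fall inside $[k+1:k+D]\bmod K$, so their messages are all available to $R_k$ and \eqref{case4} follows. Case (iii) is the specialisation of case (i) to $\mu_k=0$: here column $k$ is used alone, and Lemma~\ref{lemma1} plus the $(p_k,t_{k,r})$ inventory list every $1$ below row $k$; together with the lone bottom $1$ that appears when $l$ is even, these account for exactly the two forms of $\nu_k$ given in the statement. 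Case (i), $k\in D_i$, is the prototype calculation: by Lemma~\ref{lemma3} the pivot $(k+d_{\rm down}(k),k)$ has right-distance $\mu_k=\lambda_{2i}$, and by Lemma~\ref{lemma2} the up-distance of $(k+d_{\rm down}(k),k+\mu_k)$ forces that pivot to be the lowest $1$ in column $k+\mu_k$ as well, so adding $c_k+c_{k+\mu_k}$ cancels the shared pivot and leaves $x_k+x_{k+\mu_k}$ plus the periodic stack of $1$s sitting above the pivot in each column. Re-indexing those stacks through the chain relation $\lambda_{2z-2}=\beta_{2z-1}\lambda_{2z-1}+\lambda_{2z}$ from \eqref{chain} identifies them with the double sums in \eqref{case11} and \eqref{case12}, and membership of each index in $[k+1:k+D]\bmod K$ is checked by bounding the largest displacement using $D=\lambda_0$ and $K-D=\beta_0\lambda_0+\lambda_1$.

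The substantive step will be case (ii), $k\in E_i$, which I expect to be the main obstacle. Here Lemma~\ref{lemma3} gives a strictly smaller right-distance $\mu_k=\lambda_{2i}-c\lambda_{2i+1}$, so the pivot in column $k+\mu_k$ no longer lies in an even-submatrix but inside an odd-submatrix $\mathbf{I}_{\beta_{2i+1}\lambda_{2i+1}\times\lambda_{2i+1}}$, which means it carries $p_k$ extra $1$s below it at vertical offsets $t_{k,1}<\cdots<t_{k,p_k}$. Each of these must be annihilated by a further broadcast symbol $c_{k+t_{k,r}}$, and the columns $k+t_{k,r}$ in turn contribute their own stacks of $1$s that have to be tracked. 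I would handle this by an induction along the chain $\lambda_0>\lambda_1>\cdots>\lambda_l$, using the self-similarity of the AIR construction evident in Algorithm~\ref{algo1}: each odd-submatrix is itself a stacked identity whose cancellation pattern is governed by the same Lemmas~\ref{lemma1}--\ref{lemma3} applied one level deeper. The leftover terms then assemble into the claimed $\nu_{k+t_{k,r}}$ and the augmented $\nu_{k+\mu_k}$ (the extra inner sum $\sum_{j=1}^{c}x_{k+t_{k,j}+D-\lambda_{2i}+j\lambda_{2i+1}}$ arising precisely from the partial row $k_R=(\beta_{2i}-1)\lambda_{2i}+c\lambda_{2i+1}+d$ inside the last block of the even-submatrix), and the side-information membership is again verified using the chain relations of \eqref{chain}. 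With this machinery in place, \eqref{case1}--\eqref{case4} follow by symbolic bookkeeping.
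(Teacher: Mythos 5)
Your overall strategy---reformulating each decoding identity as the statement that a particular $\mathbb{F}_2$-sum of columns of $\mathbf{L}$ has a single $1$ in row $k$ and all remaining $1$s in rows $[k+1:k+D]\bmod K$, then verifying this column by column via Lemmas \ref{lemma1}--\ref{lemma3}---is exactly the route the paper takes in Appendix E, and your treatments of cases (i) and (iii) coincide with the paper's. Where you genuinely diverge is case (ii): you propose an induction down the chain $\lambda_0>\lambda_1>\cdots>\lambda_l$, anticipating that the auxiliary columns $k+t_{k,r}$ spawn further interference to be chased recursively through the self-similar structure. The paper instead shows the cascade stops after one level: writing $k_R=(\beta_{2i}-1)\lambda_{2i}+c\lambda_{2i+1}+d$, it derives $t_{k,p_k}<\mu_k-d$, hence $k_R+t_{k,r}$ remains inside $[(\beta_{2i}-1)\lambda_{2i}:\beta_{2i}\lambda_{2i}-1]$ and $d_{down}(k+t_{k,r})=d_{down}(k)$; consequently the lowest $1$ of column $k+t_{k,r}$ sits exactly at row $k+t_{k,r}+d_{down}(k)$, i.e., precisely on the $r$-th extra $1$ of column $k+\mu_k$, and the up-distance identities show the remaining $1$s of that column all lie in the side-information window. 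Your induction would have to discover this alignment anyway---if the recursion did not terminate at depth one, the theorem's claim that $p_k+2$ broadcast symbols suffice would be false---so the direct computation is both simpler and unavoidable. A smaller point: in case (iv) you assert that row $k$ lies in $\mathbf{I}_{\lambda_0\times\beta_0\lambda_0}$, which fails when $D\geq\lceil K/2\rceil$ since then $\beta_0=0$ and $C_0=\phi$; the paper handles this subcase separately, using $\lambda_1=K-D$ and Lemma \ref{lemma2} to show consecutive $1$s in column $k'$ are spaced $K-D$ apart and therefore miss the interference set $\{x_{k-(K-D)+1},\ldots,x_{k-1}\}$. With the case-(ii) termination argument made explicit and the $\beta_0=0$ subcase added, your plan matches the paper's proof.
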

%%%%%%%%%%%%%%%%%%%%%%%%%%%%%%%%%%%%%%%%%%%%%
%%%%%%%%%%%%%%%%%%%%%%%%%%%%%%%%%%%%%%%%%%%%%
\begin{proof}
Proof is given in Appendix E.
\end{proof}
%%%%%%%%%%%%%%%%%%%%%%
\begin{corollary}
\label{cor1}
Consider a one-sided SNC-SUICP with $K$ messages and $D$ number of side-information. Let $\gamma_k$ be subset of side-information $\mathcal{K}_k=\{x_{k+1},x_{k+2},\ldots,x_{k+D}\}$ used by receiver $R_k$ to decode its wanted message by using the scalar linear index code given by the AIR matrix of size $K \times (K-D)$. $\gamma_k$ for $k \in [0:K-1]$ has been explicitly given in Theorem \ref{thm2}. Consider a new index coding problem with $K$ messages and $K$ receivers with the receiver $R_k$ wanting the message $x_k$ and having side-information $\mathcal{K}_k=\gamma_k$. We call this index coding problem as $(K,D, \{ \gamma_k \}_{k=1}^{K})$ non-symmetric index coding problem. From Theorem \ref{thm2} it follows that the capacity of this non-symmetric side-information index coding problem is $\frac{1}{K-D}$ and the minrank of its side-information graph is $K-D$.
\end{corollary}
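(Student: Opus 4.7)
The proof splits into two claims: the capacity equality $C = 1/(K-D)$ and the minrank equality $\mathrm{minrank} = K-D$. Both follow essentially for free from Theorem \ref{thm2} once we pair it with one monotonicity observation.

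For the capacity, the plan is to prove the two inequalities separately. For achievability ($C \geq 1/(K-D)$), I would simply note that Theorem \ref{thm2} exhibits, for every $k \in [0:K-1]$, an explicit decoding rule that recovers $x_k$ from the $K-D$ broadcast symbols generated by the AIR matrix $\mathbf{L}$ together with only the symbols in $\gamma_k$. Therefore the same scalar linear code of length $K-D$ used for the original SNC-SUICP is a valid index code for the reduced-side-information problem $(K,D,\{\gamma_k\}_{k=1}^{K})$, giving rate $1/(K-D)$. For the converse ($C \leq 1/(K-D)$), I would invoke monotonicity of capacity in side-information: since $\gamma_k \subseteq \mathcal{K}_k$ for every $k$, any valid index code for the new problem is automatically a valid index code for the original SNC-SUICP, so the optimal length for the new problem is at least the optimal length for the original one. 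By the Maleki--Cadambe--Jafar result in \eqref{capacity1}, the original capacity is $1/(K-D)$, hence $C \leq 1/(K-D)$. Combining the two inequalities yields equality.

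For the minrank claim, I would use the Bar-Yossef \emph{et al.} characterization \cite{YBJK} stating that the minimum length of a scalar linear index code equals the minrank of the side-information graph. The AIR matrix $\mathbf{L}$ is itself a scalar linear index code of length $K-D$ for the new problem (by the achievability argument above), so $\mathrm{minrank} \leq K-D$. Conversely, if $\mathrm{minrank} = N < K-D$, there would exist a scalar linear index code of length $N$ achieving rate $1/N > 1/(K-D)$, contradicting the capacity bound just established. Hence $\mathrm{minrank} \geq K-D$, and equality follows.

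The only substantive work has already been done in proving Theorem \ref{thm2}; there is no real obstacle here. The one step that needs a line of care is the monotonicity-of-capacity argument, but it is essentially immediate from the definition of an index code, since shrinking $\mathcal{K}_k$ to $\gamma_k$ only removes inputs from the decoder $\mathfrak{D}_k$ and therefore removes admissible codes rather than creating new ones.
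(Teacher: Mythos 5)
Your proof is correct and follows exactly the route the paper intends: the paper states Corollary \ref{cor1} without an explicit proof, treating it as immediate from Theorem \ref{thm2}, and your argument (achievability of length $K-D$ via the decoding rules of Theorem \ref{thm2}, converse via monotonicity of capacity under shrinking side-information combined with \eqref{capacity1}, and minrank via the Bar-Yossef \emph{et al.} characterization) is precisely the reasoning being implicitly invoked. No gaps; your write-up simply makes explicit what the paper leaves to the reader.
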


%%%%%%%%%%%%%%%%%%%%%%%%%%%%%%%%%%%%%%%%%%%%%
\begin{example}
\label{ex5}
Consider a one-sided SNC-SUICP with $K=13, D=3$. For this index coding problem, we have $\lambda_1=1,\beta_0=3,\beta_1=3$ and $l=1$. The encoding matrix $\mathbf{L}_{13 \times 10}$ is given in Example \ref{ex3}. Table \ref{table2} gives the broadcast symbols and side-information set $\gamma_k$ used by receiver $R_k$ to decode $x_k$ for $k\in [0:12]$.
\begin{table*}
\centering
\setlength\extrarowheight{0pt}
\begin{tabular}{|c|c|c|c|c|c|c|c|}
\hline
\textbf{$R_k$} &$\mathcal{W}_k$&$d_{down}(k)$&$\mu_k$&$p_k$& $t_{k,\tau}$&$c_k,c_{k+t_{k,\tau}},c_{k+\mu_k}$&$\gamma_k$\\
\cline{6-7}
&&&&& \multicolumn{1}{c}{}&$\tau=1,2,\ldots,p_k$&~\\
\hline
$R_0$ & $x_0$ & 10&3&0&0&$c_0,c_3$&$x_3$ \\
\hline
$R_1$ & $x_1$ & 10&3&0&0&$c_1,c_4$&$x_4$\\
\hline
$R_2$ & $x_2$ & 10&3&0&0&$c_2,c_5$&$x_5$\\
\hline
$R_3$ & $x_3$ & 7&3&0&0&$c_3,c_6$&$x_6$ \\
\hline
$R_4$ & $x_4$ & 7&3&0&0&$c_4,c_7$&$x_7$ \\
\hline
$R_5$ & $x_5$ & 7&3&0&0&$c_5,c_8$&$x_8$ \\
\hline
$R_6$ & $x_6$ & 4&3&2&1,2&$c_6,c_7,c_8,c_9$&$x_7,x_8,x_9$ \\
\hline
$R_7$ & $x_7$ & 4&2&1&1&$c_7,c_8,c_9$&$x_8,x_9,x_{10}$ \\
\hline
$R_8$ & $x_8$ &4&1&0&0&$c_8,c_9$&$x_9,x_{10},x_{11}$ \\
\hline
$R_{9}$ & $x_{9}$ & 3&-&0&0&$c_9$&$x_{10},x_{11},x_{12}$ \\
\hline
$R_{10}$ & $x_{10}$ & -&-&-&-&$c_0$&$x_0$ \\
\hline
$R_{11}$ & $x_{11}$ &-&-&-&-&$c_1$&$x_1$ \\
\hline
$R_{12}$ & $x_{12}$ &-&-&-&-&$c_2$&$x_2$ \\
\hline
\end{tabular}
\vspace{5pt}
\caption{Reduced complexity decoding for the SNC-SUICP given in Example \ref{ex5}}
\label{table2}
\vspace{-20pt}
\end{table*}
\end{example}
%%%%%%%%%%%%%%%%%%%%%%%%%%%%%%%%%%%%%%%%%%%%%%%%%%%%%%%%%%%%%%%

%%%%%%%%%%%%%%%%%%%%%%%%%%%%%%%%%%%%%%%%%%%%%%%%%%%%%%%%%%%%%%%
\begin{example}
\label{ex6}
Consider a one-sided SNC-SUICP with $K=44, D=17$. For this index coding problem, we have $\lambda_1=10,\lambda_2=7,\lambda_3=3,\lambda_4=1$ and $l=4$. The AIR matrix encoding matrix $\mathbf{L}_{44 \times 27}$ is given in Fig. \ref{ex4matrix}.

\begin{figure*}[t]
\arraycolsep=0.9pt
\begin{small}
{
$$\mathbf{L}_{44 \times 27}=\left[
\begin{array}{ccccccccccccccccccccccccccc}
1 & 0 & 0 & 0 & 0 & 0 & 0 & 0 & 0 & 0 & 0 & 0& 0 & 0 & 0 & 0 & 0 & 0 & 0 & 0 & 0 & 0 & 0 & 0 & 0 & 0 & 0\\
0 & 1 & 0 & 0 & 0 & 0 & 0 & 0 & 0 & 0 & 0 & 0& 0 & 0 & 0 & 0 & 0 & 0 & 0 & 0 & 0 & 0 & 0 & 0 & 0 & 0 & 0\\
0 & 0 & 1 & 0 & 0 & 0 & 0 & 0 & 0 & 0 & 0 & 0& 0 & 0 & 0 & 0 & 0 & 0 & 0 & 0 & 0 & 0 & 0 & 0 & 0 & 0 & 0\\
0 & 0 & 0 & 1 & 0 & 0 & 0 & 0 & 0 & 0 & 0 & 0& 0 & 0 & 0 & 0 & 0 & 0 & 0 & 0 & 0 & 0 & 0 & 0 & 0 & 0 & 0\\
0 & 0 & 0 & 0 & 1 & 0 & 0 & 0 & 0 & 0 & 0 & 0& 0 & 0 & 0 & 0 & 0 & 0 & 0 & 0 & 0 & 0 & 0 & 0 & 0 & 0 & 0\\
0 & 0 & 0 & 0 & 0 & 1 & 0 & 0 & 0 & 0 & 0 & 0& 0 & 0 & 0 & 0 & 0 & 0 & 0 & 0 & 0 & 0 & 0 & 0 & 0 & 0 & 0\\
 0 & 0 & 0 & 0 & 0  & 0 & 1 & 0 & 0 & 0 & 0& 0 & 0 & 0 & 0 & 0 & 0 & 0 & 0 & 0 & 0 & 0 & 0 & 0 & 0 & 0 & 0\\
 0 & 0 & 0 & 0 & 0 & 0 & 0 & {\color{red}\textbf{1}} & 0 & 0 & 0 & 0& 0 & 0 & 0 & 0 & 0 & 0 & 0 & 0 & 0 & 0 & 0 & 0 & 0 & 0 & 0\\
 0 & 0 & 0 & 0 & 0 & 0 & 0 & 0 & 1 & 0 & 0 & 0& 0 & 0 & 0 & 0 & 0 & 0 & 0 & 0 & 0 & 0 & 0 & 0 & 0 & 0 & 0\\
0 & 0 & 0 & 0 & 0 & 0 & 0 & 0 & 0 & 1 & 0 & 0& 0 & 0 & 0 & 0 & 0 & 0 & 0 & 0 & 0 & 0 & 0 & 0 & 0 & 0 & 0\\
 0 & 0 & 0 & 0 & 0 & 0 & 0 & 0 & 0 & 0 & 1 & 0& 0 & 0 & 0 & 0 & 0 & 0 & 0 & 0 & 0 & 0 & 0 & 0 & 0 & 0 & 0\\
 0 & 0 & 0 & 0 & 0 & 0 & 0 & 0 & 0 & 0 & 0 & 1& 0 & 0 & 0 & 0 & 0 & 0 & 0 & 0 & 0 & 0 & 0 & 0 & 0 & 0 & 0\\
 0 & 0 & 0 & 0 & 0 & 0 & 0 & 0 & 0 & 0 & 0 & 0& 1 & 0 & 0 & 0 & 0 & 0 & 0 & 0 & 0 & 0 & 0 & 0 & 0 & 0 & 0\\
0 & 0 & 0 & 0 & 0 & 0 & 0 & 0 & 0 & 0 & 0 & 0& 0 & 1 & 0 & 0 & 0 & 0 & 0 & 0 & 0 & 0 & 0 & 0 & 0 & 0 & 0\\
0 & 0 & 0 & 0 & 0 & 0 & 0 & 0 & 0 & 0 & 0 & 0& 0 & 0 & 1 & 0 & 0 & 0 & 0 & 0 & 0 & 0 & 0 & 0 & 0 & 0 & 0\\
 0 & 0 & 0 & 0 & 0 & 0 & 0 & 0 & 0 & 0 & 0 & 0& 0 & 0 & 0 & 1 & 0 & 0 & 0 & 0 & 0 & 0 & 0 & 0 & 0 & 0 & 0\\
0 & 0 & 0 & 0 & 0 & 0 & 0 & 0 & 0 & 0 & 0 & 0& 0 & 0 & 0 & 0 & 1 & 0 & 0 & 0 & 0 & 0 & 0 & 0 & 0 & 0 & 0 \\
 0 & 0 & 0 & 0 & 0 & 0 & 0 & 0 & 0 & 0 & 0 & 0& 0 & 0 & 0 & 0 & 0 & 1 & 0 & 0 & 0 & 0 & 0 & 0 & 0 & 0 & 0\\
 0 & 0 & 0 & 0 & 0 & 0 & 0 & 0 & 0 & 0 & 0 & 0& 0 & 0 & 0 & 0 & 0 & 0 &1 & 0 & 0 & 0 & 0 & 0 & 0 & 0 & 0\\
  0 & 0 & 0 & 0 & 0 & 0 & 0 & 0 & 0 & 0 & 0 & 0& 0 & 0 & 0 & 0 & 0 & 0 & 0 & 1 & 0 & 0 & 0 & 0 & 0 & 0 & 0\\
 0 & 0 & 0 & 0 & 0 & 0 & 0 & 0 & 0& 0  & 0 & 0 & 0& 0 & 0 & 0 & 0 & 0 & 0 & 0 & 1 & 0 & 0 & 0 & 0 & 0 & 0\\
 0 & 0 & 0 & 0 & 0 & 0 & 0 & 0 & 0 & 0 & 0 & 0 & 0& 0 & 0 & 0 & 0 & 0 & 0 & 0 & 0 & 1 & 0 & 0 & 0 & 0 & 0\\
  0 & 0 & 0 & 0 & 0 & 0 & 0 & 0 & 0 & 0 & 0 & 0& 0 & 0 & 0 & 0 & 0 & 0 & 0 & 0 & 0 & 0 & 1 & 0 & 0 & 0 & 0\\
 0 & 0 & 0 & 0 & 0 & 0 & 0 & 0 & 0 & 0 & 0 & 0& 0 & 0 & 0 & 0 & 0 & 0 & 0 & 0 & 0 & 0 & 0 & 1 & 0 & 0 & 0\\
 0 & 0 & 0 & 0 & 0 & 0 & 0 & 0 & 0 & 0 & 0 & 0& 0 & 0 & 0 & 0 & 0 & 0 & 0 & 0 & 0 & 0 & 0 & 0 & 1 & 0 & 0\\
  0 & 0 & 0 & 0 & 0 & 0 & 0 & 0 & 0 & 0 & 0 & 0& 0 & 0 & 0 & 0 & 0 & 0 & 0 & 0 & 0 & 0 & 0 & 0 & 0 & 1 & 0\\
  0 & 0 & 0 & 0 & 0 & 0 & 0 & 0 & 0 & 0 & 0 & 0& 0 & 0 & 0 & 0 & 0 & 0 & 0 & 0 & 0 & 0 & 0 & 0 & 0 & 0 & 1\\
  \hline
1 & 0 & 0 & 0 & 0 & 0 & 0 & 0 & 0 & 0 & 0 & 0& 0 & 0 & 0 & 0 & 0 & \vline 1 & 0 & 0 & 0 & 0 & 0 & 0 & 0 & 0 & 0\\
  0 & 1 & 0 & 0 & 0 & 0 & 0 & 0 & 0 & 0 & 0 & 0& 0 & 0 & 0 & 0 & 0 &\vline  0 & 1 & 0 &0 & 0 & 0 & 0 & 0 & 0 & 0\\
  0 & 0 & 1 & 0 & 0 & 0 & 0 & 0 & 0 & 0 & 0 & 0& 0 & 0 & 0 & 0 & 0 &\vline 0 & 0 &  1 & 0& 0 & 0 & 0 & 0 & 0 & 0\\
  0 & 0 & 0 & 1 & 0 & 0 & 0 & 0 & 0 & 0 & 0 & 0& 0 & 0 & 0 & 0 & 0 &\vline 0 & 0 & 0 &  1 & 0 & 0& 0 & 0& 0 & 0\\
 0 & 0 & 0 & 0 & 1 & 0 & 0 & 0 & 0 & 0 & 0 & 0& 0 & 0 & 0 & 0 & 0 &\vline 0 & 0 &0 & 0 & 1& 0 & 0 & 0 & 0 & 0\\
  0 & 0 & 0 & 0 & 0 & 1 & 0 & 0 & 0 & 0 & 0 & 0& 0 & 0 & 0 & 0 & 0 &\vline 0 & 0 & 0 & 0 & 0 & 1 & 0 & 0 & 0 & 0\\
  0 & 0 & 0 & 0 & 0  & 0 & 1 & 0 & 0 & 0 & 0& 0 & 0 & 0 & 0 & 0 & 0 &\vline 0 & 0 & 0 & 0 & 0 & 0 & 1 & 0 & 0 & 0 \\
 0 & 0 & 0 & 0 & 0 & 0 & 0 & {\color{red}\textbf{1}} & 0 & 0 & 0 & 0& 0 & 0 & 0 & 0 & 0 &\vline 0 & 0 & 0 & 0 & 0 & 0 & 0 & {\color{red}\textbf{1}} & 0 & 0\\
 0 & 0 & 0 & 0 & 0 & 0 & 0 & 0 & 1 & 0 & 0 & 0& 0 & 0 & 0 & 0 & 0 &\vline 0 & 0 & 0 & 0 & 0 & 0 & 0 & 0 & 1 & 0\\
 0 & 0 & 0 & 0 & 0 & 0 & 0 & 0 & 0 & 1 & 0 & 0& 0 & 0 & 0 & 0 & 0 &\vline  \underline{0} & \underline{0} & \underline{0} & \underline{0} & \underline{0} & \underline{0} & \underline{0} & \underline{0} & \underline{0} & \underline{1}\\ 
 0 & 0 & 0 & 0 & 0 & 0 & 0 & 0 & 0 & 0 & 1 & 0& 0 & 0 & 0 & 0 & 0 &\vline 1 & 0 & 0 & 0 & 0 & 0 & 0 &\vline {\color{red}\textbf{1}} & 0 & 0\\
 0 & 0 & 0 & 0 & 0 & 0 & 0 & 0 & 0 & 0 & 0 & 1& 0 & 0 & 0 & 0 & 0 &\vline 0 &1 & 0 & 0 & 0 & 0 & 0 &\vline 0 & 1 & 0\\
 0 & 0 & 0 & 0 & 0 & 0 & 0 & 0 & 0 & 0 & 0 & 0& 1 & 0 & 0 & 0 & 0 &\vline 0 & 0 & 1 & 0  & 0 & 0 & 0 &\vline 0 & 0 & 1\\
 0 & 0 & 0 & 0 & 0 & 0 & 0 & 0 & 0 & 0 & 0 & 0& 0 & 1 & 0 & 0 & 0 &\vline 0 & 0 & 0 & 1 & 0 & 0 & 0 &\vline {\color{red}\textbf{1}} & 0 & 0\\
0 & 0 & 0 & 0 & 0 & 0 & 0 & 0 & 0 & 0 & 0 & 0& 0 & 0 & 1 & 0 & 0 &\vline 0 & 0 & 0 & 0 & 1 & 0 & 0 &\vline 0 & 1 & 0\\
0 & 0 & 0 & 0 & 0 & 0 & 0 & 0 & 0 & 0 & 0 & 0& 0 & 0 & 0 & 1 & 0 &\vline 0 & 0 & 0 & 0 & 0 & 1 & 0 &\vline \underline{0} & \underline{0} & \underline{1}\\
0 & 0 & 0 & 0 & 0 & 0 & 0 & 0 & 0 & 0 & 0 & 0& 0 & 0 & 0 & 0 & 1 &\vline 0 & 0 & 0 & 0 & 0 & 0 & 1 &\vline {\color{red}\textbf{1}} & 1 & 1\\
 \end{array}
\right]$$
\caption{Encoding matrix for the SNC-SUICP in Example \ref{ex4}.}
\label{ex4matrix}
}
\end{small}
\end{figure*}
%%%%%%%%%%%%%%%%%%%%%%%%%%%%%%%%%%%%%%%%%%%%%%%%%%%%%%%%%%%%%%%%%%%%%%%%%
%%%%%%%%%%%%%%%%%%%%%%%%%%%%%%%%%%%%%%%%%%%%%%%%%%%%%%%%%%%%%%%%%%%%%%%%%
Let $k=7$. Receiver $R_7$ wants to decode $x_7$. We have, $d_{down}(7)=27, \mu_7=d_{right}(7+27,7)=d_{right}(34,7)=17,t_{7,1}=3,t_{7,2}=6,t_{7,3}=9$. The receiver $R_7$ decodes $x_7$ by adding the broadcast symbols $c_7,c_{10}, c_{13}, c_{16}$ and $c_{24}$. From the AIR matrix  we have 
\begin{align*}
&c_7=x_7+x_{34} \\&
c_{10}=x_{10}+x_{37}\\&
c_{13}=x_{13}+x_{40}\\&
c_{16}=x_{16}+x_{43}\\&
c_{24}=x_{24}+x_{34}+x_{37}+x_{40}+x_{43}\\&
c_7+c_{10}+c_{13}+c_{16}+c_{24}=x_7+x_{10}+x_{13}+x_{16}+x_{24}.
\end{align*}
 
In $c_7+c_{10}+c_{13}+c_{16}+c_{24}$, the message symbols $x_{10},x_{13},x_{16}$ and $x_{24}$ are in the side-information of $R_7$. Thus, $R_7$ decodes its wanted message $x_7$. 

Table \ref{table3} gives the broadcast symbols and side-information $\gamma_k$ used by receiver $R_k$ to decode $x_k$ for all $k\in [0:43]$.
%%%%%%%%%%%%%%%%%%%%%%%%%%%%%%%%%%%%%%%%%%%%%%%%%%%%%%%%%%%%%%%%
%%%%%%%%%%%%%%%%%%%%%%%%%%%%%%%%%%%%%%%%%%%%%%%%%%%%%%%%%%%%%%%%
\begin{table*}
\centering
\setlength\extrarowheight{0.0pt}
\begin{tabular}{|c|c|c|c|c|c|c|c|}
\hline
\textbf{$R_k$} &$\mathcal{W}_k$&$d_{down}$&$\mu_k$&$p_k$& $t_{k,\tau}$&$c_k,c_{k+t_{k,\tau}},c_{k+\mu_k}$&$\gamma_k$ \\
\cline{6-7}
&&(k)&&& \multicolumn{1}{c}{}&$\tau=1,2,\ldots,p_k$&~\\
\hline
$R_0$ & $x_0$ & 27&17&1&$10$&$c_0,c_{10},c_{17}$&$x_{10},x_{17}$ \\
\hline
$R_1$ & $x_1$ & 27&17&1&$10$&$c_1,c_{11},c_{18}$&$x_{11},x_{18}$ \\
\hline
$R_2$ & $x_2$ & 27&17&1&$10$&$c_2,c_{12},c_{19}$&$x_{12},x_{19}$ \\
\hline
$R_3$ & $x_3$ & 27&17&1&$10$&$c_3,c_{13},c_{20}$&$x_{13},x_{20}$ \\
\hline
$R_4$ & $x_4$ & 27&17&1&$10$&$c_4,c_{14},c_{21}$&$x_{14},x_{21}$ \\
\hline
$R_5$ & $x_5$ & 27&17&1&$10$&$c_5,c_{15},c_{22}$&$x_{15},x_{22}$ \\
\hline
$R_6$ & $x_6$ & 27&17&1&$10$&$c_6,c_{16},c_{23}$&$x_{16},x_{23}$ \\
\hline
$R_7$ & $x_7$ & 27&17&3&$3,6,9$&$c_7,c_{10},c_{13},c_{16},c_{24}$&$x_{10},x_{13},x_{16},x_{24}$ \\
\hline
$R_8$ & $x_8$ &27&17&3&$3,6,8$&$c_8,c_{11},c_{14},c_{16},c_{25}$&$x_{11},x_{14},x_{16},x_{25}$ \\
\hline
$R_{9}$ & $x_{9}$ & 27&17&3&$3,6,7$&$c_9,c_{12},c_{15},c_{16},c_{26}$&$x_{12},x_{15},x_{16},x_{26}$ \\
\hline
$R_{10}$ & $x_{10}$ & 27&7&0&$0$&$c_{10},c_{17}$&$x_{17},x_{27}$ \\
\hline
$R_{11}$ & $x_{11}$ & 27&7&0&$0$&$c_{11},c_{18}$&$x_{18},x_{18}$ \\
\hline
$R_{12}$ & $x_{12}$ & 27&7&0&$0$&$c_{12},c_{19}$&$x_{19},x_{29}$ \\
\hline
$R_{13}$ & $x_{13}$ & 27&7&0&$0$&$c_{13},c_{20}$&$x_{20},x_{30}$ \\
\hline
$R_{14}$ & $x_{14}$ & 27&7&0&$0$&$c_{14},c_{21}$&$x_{21},x_{31}$ \\
\hline
$R_{15}$ & $x_{15}$ & 27&7&0&$0$& $c_{15},c_{22}$&$x_{22},x_{32}$\\
\hline
$R_{16}$ & $x_{16}$ & 27&7&0&$0$& $c_{16},c_{23}$&$x_{23},x_{33}$\\
\hline
$R_{17}$ & $x_{17}$ & 20&7&2&$3,6$& $c_{17},c_{20},c_{23},c_{24}$&$x_{20},x_{23},x_{24},x_{27},x_{30},x_{33},x_{34}$\\
\hline
$R_{18}$ & $x_{18}$ & 20&7&2&$3,5$&$c_{18},c_{21},c_{23},c_{25}$&$x_{21},x_{23},x_{25},x_{28},x_{31},x_{33},x_{35}$ \\
\hline
$R_{19}$ & $x_{19}$ & 20&7&2&$3,4$&$c_{19},c_{22},c_{23},c_{26}$&$x_{22},x_{23},x_{26},x_{29},x_{32},x_{33},x_{36}$ \\
\hline
$R_{20}$ & $x_{20}$ & 20&4&1&$3$& $c_{20},c_{23},c_{24}$&$x_{23},x_{24},x_{30},x_{33},x_{34},x_{37}$\\
\hline
$R_{21}$ & $x_{21}$ &20&4&1&$2$& $c_{21},c_{23},c_{25}$&$x_{23},x_{25},x_{31},x_{33},x_{35},x_{38}$\\
\hline
$R_{22}$ & $x_{22}$ &20&4&1&$1$& $c_{22},c_{23},c_{26}$&$x_{23},x_{26},x_{32},x_{33},x_{36},x_{39}$\\
\hline
$R_{23}$ & $x_{23}$ &20&1&0&$0$&$c_{23},c_{24}$&$x_{24},x_{33},x_{34},x_{37},x_{40}$ \\
\hline
$R_{24}$ & $x_{24}$ &19&1&0&$0$& $c_{24},c_{25}$&$x_{25},x_{34},x_{35},x_{37},x_{38},x_{40},x_{41}$\\
\hline
$R_{25}$ & $x_{25}$ &18&1&0&$0$&$c_{25},c_{26}$&$x_{26},x_{35},x_{36},x_{38},x_{39},x_{41},x_{42}$ \\
\hline
$R_{26}$ & $x_{26}$ &17&-&0&$0$& $c_{26}$&$x_{36},x_{39},x_{42},x_{43}$\\
\hline
$R_{27}$ & $x_{27}$ &-&-&-&-& $c_{0}$&$x_{0}$\\
\hline
$R_{28}$ & $x_{28}$ &-&-&-&-& $c_{1}$&$x_{1}$\\
\hline
$R_{29}$ & $x_{29}$ &-&-&-&-& $c_{2}$&$x_{2}$\\
\hline
$R_{30}$ & $x_{30}$ &-&-&-&-& $c_{3}$&$x_{3}$\\
\hline
$R_{31}$ & $x_{31}$ &-&-&-&-& $c_{4}$&$x_{4}$\\
\hline
$R_{32}$ & $x_{32}$ &-&-&-&-& $c_{5}$&$x_{5}$\\
\hline
$R_{33}$ & $x_{33}$ &-&-&-&-& $c_{6}$&$x_{6}$\\
\hline
$R_{34}$ & $x_{34}$ &-&-&-&-& $c_{7}$&$x_{7}$\\
\hline
$R_{35}$ & $x_{35}$ &-&-&-&-& $c_{8}$&$x_{8}$\\
\hline
$R_{36}$ & $x_{36}$ &-&-&-&-& $c_{9}$&$x_{9}$\\
\hline
$R_{37}$ & $x_{37}$ &-&-&-&-& $c_{10}$&$x_{10}$\\
\hline
$R_{38}$ & $x_{38}$ &-&-&-&-& $c_{11}$&$x_{11}$\\
\hline
$R_{39}$ & $x_{39}$ &-&-&-&-& $c_{12}$&$x_{12}$\\
\hline
$R_{40}$ & $x_{40}$ &-&-&-&-& $c_{13}$&$x_{13}$\\
\hline
$R_{41}$ & $x_{41}$ &-&-&-&-& $c_{14}$&$x_{14}$\\
\hline
$R_{42}$ & $x_{42}$ &-&-&-&-& $c_{15}$&$x_{15}$\\
\hline
$R_{43}$ & $x_{43}$ &-&-&-&-& $c_{16}$&$x_{16}$\\
\hline
\end{tabular}
\vspace{5pt}
\caption{Reduced complexity decoding for the SNC-SUICP given in Example \ref{ex4}}
\label{table3}
\vspace{0pt}
\end{table*}
\end{example}
%%%%%%%%%%%%%%%%%%%%%%%%%%%%%%%%%%%%%%%%%%%%%%%%%%%%%%%%%%%%%%%%%%%%%%%%%%%%%%%%%%%%%%%%%%%%%%%%%%%%%%%%
%%%%%%%%%%%%%%%%%%%%%%%%%%%%%%%%%%%%%%%%%%%%%%%%%%%%%%%%%%%%%%%%%%%%%%%%%%%%%%%%%%%%%%%%%%%%%%%%%%%%%%%%%%%%%%%%%
\begin{example}
\label{ex7}
Consider the one-sided SNC-SUICP with $K=17, D=7$. For this index coding problem, $l=2, \lambda_1=3, \lambda_2=1$. The AIR matrix of size $17 \times 10$ was seen in Section \ref{sec2}. For this SNC-SUICP, Table \ref{table4} gives the broadcast symbols and side-information $\gamma_k$ used by receiver $R_k$ to decode $x_k$ for all $k\in [0:16]$. 
\end{example}
%%%%%%%%%%%%%%%%%%%%%%%%%%%%%%%%%%%%%%%%%
%%%%%%%%%%%%%%%%%%%%%%%%%%%%%%%%%%%%%%%%%
\begin{table*}
\centering
\setlength\extrarowheight{2pt}
\begin{tabular}{|c|c|c|c|c|c|c|c|}
\hline
\textbf{$R_k$} &$\mathcal{W}_k$&$d_{down}(k)$&$\mu_k$&$p_k$& $t_{k,\tau}$&$c_k,c_{k+t_{k,\tau}},c_{k+\mu_k}$&$\gamma_k$\\
\cline{6-7}
&&&&& \multicolumn{1}{c}{}&$\tau=1,2,\ldots,p_k$&~\\
\hline
$R_0$ & $x_0$ & 10&7&2&3,6&$c_0,c_3,c_6,c_{7}$&$x_3,x_6,x_7$ \\
\hline
$R_1$ & $x_1$ & 10&7&2&3,5&$c_1,c_4,c_6,c_8$&$x_3,x_6,x_7$\\
\hline
$R_2$ & $x_2$ & 10&7&2&3,4&$c_2,c_5,c_6,c_9$&$x_5,x_6,x_{9}$\\
\hline
$R_3$ & $x_3$ & 10&4&1&3&$c_3,c_6,c_7$&$x_6,x_7,x_{10}$ \\
\hline
$R_4$ & $x_4$ & 10&4&1&2&$c_4,c_6,c_8$&$x_6,x_8,x_{11}$ \\
\hline
$R_5$ & $x_5$ & 10&4&1&1&$c_5,c_6,c_9$&$x_6,x_{9},x_{12}$ \\
\hline
$R_6$ & $x_6$ & 10&1&0&0&$c_6,c_7$& $x_{7},x_{10},x_{13}$ \\
\hline
$R_7$ & $x_7$ & 9&1&0&0&$c_7,c_8$& $x_{9},x_{11},x_{12},x_{14},x_{15}$\\
\hline
$R_8$ & $x_8$ &8&1&0&0&$c_8,c_9$& $x_{12},x_{15},x_{16}$ \\
\hline
$R_{9}$ & $x_{9}$ &7&-&0&0&$c_9$& $x_0$\\
\hline
$R_{10}$ & $x_{10}$ &-&-&-&-&$c_0$& $x_0$\\
\hline
$R_{11}$ & $x_{11}$ &-&-&-&-&$c_1$& $x_1$\\
\hline
$R_{12}$ & $x_{12}$ &-&-&-&-&$c_2$& $x_2$\\
\hline
$R_{13}$ & $x_{13}$ &-&-&-&-&$c_3$& $x_3$\\
\hline
$R_{14}$ & $x_{14}$ &-&-&-&-&$c_4$& $x_4$\\
\hline
$R_{15}$ & $x_{15}$ &-&-&-&-&$c_5$& $x_5$\\
\hline
$R_{16}$ & $x_{16}$ &-&-&-&-&$c_6$& $x_6$\\
\hline
\end{tabular}
\vspace{5pt}
\caption{Reduced complexity decoding for the SNC-SUICP given in Example \ref{ex7}}
\label{table4}
\vspace{-20pt}
\end{table*}
%%%%%%%%%%%%%%%%%%%%%%%%%%%%%%
%%%%%%%%%%%%%%%%%%%%%%%%%%%%%%%%%%%%%%%%%%%%%%%%%%%%%%%%%%%%%%
\section{Application to Noisy Index Coding}
\label{sec5}

In a noisy index coding problem with binary transmission, index coded bits are transmitted over $K$  binary symmetric channels, ( $K$ is the number of users) each with independent and identical cross-over probability $p$.The value of $p$ is determined by the channel noise statistics,like AWGN, Rayleigh fading, Rician fading etc. 

 Let $\mathfrak{C}=\{ c_{0},c_{1},\ldots, c_{N-1} \}$ be the $N-1$ broadcast symbols ($c_i \in \mathbb{F}_2$ for $i\in [0:N-1]$), which are required to be broadcasted to $K$ receivers. Let $\{ s_{0},s_{1},\ldots,s_{N-1} \}$, denote the sequence of binary antipodal channel symbols transmitted by the source.

For instance, considering  a quasi-static fading broadcast channel, the received symbol sequence  at receiver $R_{k}$ corresponding to the transmission of $s_{i}$ is given by $y_{k}=h_{k}s_{i}+w_k$ where $h_{k}$ is the fading coefficient associated with the channel from source to receiver $R_{k}$ and $w_k$ is the sample of additive white Gaussian noise process. Receiver $R_k$ decodes its wanted message symbol $x_k$ in the following two steps.
\begin{itemize}
\item $R_k$ first estimates $N$ broadcast symbols from $y_0,y_1,\ldots,y_{N-1}$. Let $\tilde{c}_0,\tilde{c}_1,\ldots,\tilde{c}_{N-1}$ be the estimated broadcast symbols, where $\tilde{c}_i \in \mathbb{F}_2$. A broadcast symbol error occurs with probability $p$.
\item By using $\tilde{c}_0,\tilde{c}_1,\ldots,\tilde{c}_{N-1}$ and the side-information available with it, receiver $R_k$ decodes its wanted message. Let $\tilde{x}_k$ be the message symbol decoded by the receiver $R_k$. The message is in error if $\tilde{x}_k \neq x_k.$ 
\end{itemize}

In \cite{TRCR}, it is shown that the message probability of error in decoding a message at a particular receiver decreases with a decrease in the number of transmissions used to decode the message among the total of broadcast transmissions. Also an algorithm to identify an optimal index code for single uniprior index coding problems which gives the best performance in terms of minimal maximum error probability across all the receivers is given. 

%In \cite{AnR},  Anjana \textit{et al.} considered noisy index coding problem over AWGN channel. They gave an algorithm to map the index coded bits to appropriate sized PSK symbols such that for the given index code, in general, the receiver with large amount of side information will gain in probability of error performance compared to the ones with lesser amount of side-information, depending upon the index code used.  

%In \cite{SIE}, Kim \textit{et al.} studied the problem of index coding with erroneous side-information. 

%In noisy index coding, the performance of the optimal length index code depends on following parameters.
%\begin{itemize}
%\item The number of broadcast symbols used by each receiver to decode its wanted message.
%\item The side-information used by each receiver to decode its wanted message.
%\end{itemize}

For one-sided SNC-SUICP, Theorem \ref{thm2} explicitly gives the number of broadcast symbols used by each receiver to decode its wanted message.  The number of broadcast symbols used by each receiver to decode its wanted message is summarized below. 
\begin{itemize}
\item If $k \in [K-D-\lambda_l:K-1]$, then the number of broadcast symbols required by $R_k$ to decode $x_k$ is one.
\item If $k \in [0:K-D-\lambda_l-1]$, then the number of broadcast symbols required by $R_k$ to decode $x_k$ is $p_k+2$, where $p_k$ is the number of $1$s below $\mathbf{L}(k+d_{down}(k),k+\mu_k)$ in the AIR matrix.
\end{itemize}    
%%%%%%%%%
For the codes discussed in Examples \ref{ex5}, \ref{ex6} and \ref{ex7},  Tables \ref{table2} \ref{table3} and \ref{table4} show  the exact number of transmissions used by each receiver to decode its wanted message in the 7th column. In the following subsection we elaborate this aspect with simulation results for Example \ref{ex5}. 

Another application of our results in noise index coding problem is related to Instantly Decodable Network Coding (IDNC). IDNC deals with code designs when the receivers have no buffer and need to decode the wanted messages instantly without having stored previous transmissions. A recent survey article on IDNC with application to Device-to-Device (D2D) communications is \cite{DSAA}. These results are valid for index coding since it is a special case of network coding. The decoding expressions we have developed in this paper identifies the receivers which can decode their wanted messages using only one broadcasted transmission. In other words we have shown that the receivers $R_k$ with  $k \in [K-D-\lambda_l:K-1]$ can decode their messages without using any buffer. These receivers can be seen in the 7th column of Tables \ref{table2}, \ref{table3} and \ref{table4} for Examples \ref{ex5}, \ref{ex6} and \ref{ex7} respectively.  

%%%%%%%%%%%%%%%%%%%%%%%%%%%%%%%%%%%%%%%%%%%%%%%%%%%%%%%
\subsection{Simulation Results}
In this subsection, we give simulation results which show that the number of broadcast symbols used by receivers to decode its wanted message matters. We show that the receiver which uses lesser number of broadcast symbols to decode its wanted message perform better than those using more number of broadcast symbols. 

We consider the SNC-SUICP given in Example \ref{ex5} to simulate the SNR vs BER performance of each receiver. In the simulation, the transmitter uses BPSK signal set. The performance of receivers were evaluated in the AWGN scenario and Rayleigh fading scenario with perfect channel state information at receiver. The performance of receiver $R_k$, $k \in [0:12]$ in Example \ref{ex5} for AWGN channel is shown in Fig. \ref{afig114}. The performance of receiver $R_k$, $k \in [0:12]$ in Example \ref{ex5} for Rayleigh fading channel is shown in  Fig. \ref{afig113}.

In Fig. \ref{afig114} and \ref{afig113}, we observe that at any given SNR, maximum error probability occurs at receiver $R_6$ and minimum error probability occurs at receivers $R_{9},R_{10},R_{11}$ and $R_{12}$. We can also observe that the SNR vs BER plots of receivers $R_0,R_1,R_2,R_3,R_4,R_5$ and $R_8$ are overlapping with each other. From Table \ref{table2}, we see that $R_6$ is using four broadcast symbols to decode its wanted message and receivers $R_{9},R_{10},R_{11},R_{12}$ using only one broadcast symbol to decode their wanted message symbol. We also observe that the receivers $R_0,R_1,R_2,R_3,R_4,R_5$ and $R_8$ are using equal number of broadcast symbols to decode their wanted message symbol. Hence, from simulations, it is evident that the receiver which use lesser number of broadcast symbols to decode its wanted message perform better than those using more number of broadcast symbols.

The low-complexity decoding given in Theorem \ref{thm2} explicitly quantifies the number of broadcast symbols and the number of side-information required by each receiver to decode its wanted message. Hence, Theorem \ref{thm2} gives the receiver with best error performance and the receiver with worst error performance. In fact, based on number of broadcast symbols used by a receiver to decode its wanted message, Theorem \ref{thm2} can partition the $K$ receivers into disjoint sets such that the error performance in every receiver in a set is equal. In Fig. \ref{afig114} and \ref{afig113}, we can observe that the error performance of every receiver in the four sets $\{R_{9},R_{10},R_{11},R_{12}\}$, $\{R_0,R_1,R_2,R_3,R_4,R_5,R_8\}$, $\{R_6\}$ and $\{R_7\}$ is equal and hence we can notice four curves in the simulation results.

In Rayleigh fading channel, simulation results show that the receiver $R_6$ requires 5.3 dB more SNR than the receivers $R_{9},R_{10},R_{11},R_{12}$ to get same probability of error. The receivers $R_0,R_1,R_2,R_3,R_4,R_5,R_8$ require 2.7 dB more SNR  than the receivers $R_{9},R_{10},R_{11},R_{12}$ to get same probability of error.
In AWGN channel, simulation results show that the receiver $R_6$ requires 1.0 dB more SNR than the receivers $R_{9},R_{10},R_{11},R_{12}$ to get a  BER of $2 \times 10^{-4}$. The receivers $R_0,R_1,R_2,R_3,R_4,R_5,R_8$ require 0.5 dB more SNR  than the receivers $R_{9},R_{10},R_{11},R_{12}$ to get a BER of $2 \times 10^{-4}$.
%%%%%%%%%%%%%%%%%%%%%%%%%%
\begin{figure}
\centering
\hspace*{-2.0cm}
\includegraphics[scale=0.60]{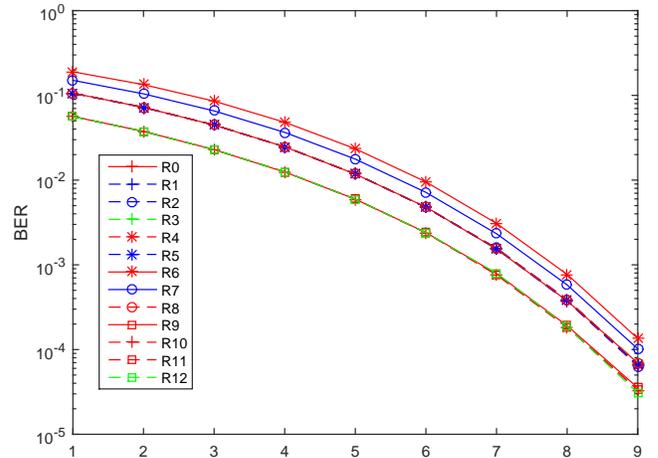}
\caption{SNR Vs BER in AWGN channel at receivers $R_k$ for $k \in [0:12]$ for Example \ref{ex5}.}
\label{afig114}
\end{figure}

\begin{figure}
\centering
\includegraphics[scale=0.6]{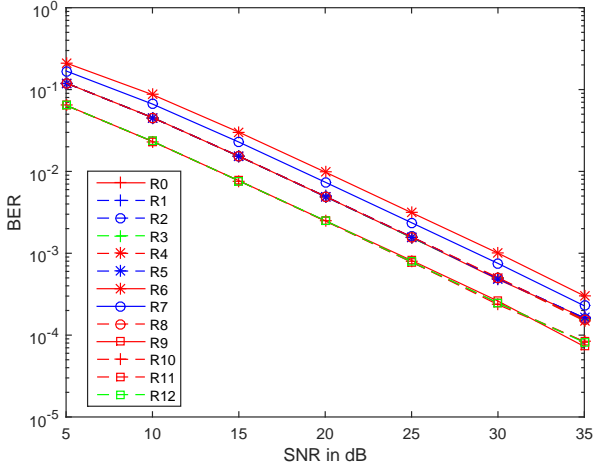}
\caption{SNR Vs BER in Rayleigh fading channel at receivers $R_k$ for $k \in [0:12]$ for Example \ref{ex5}.}
\label{afig113}
\end{figure}
%%%%%%%%%%%%%%%%%%%%%%%%%%%%%%%
\section{Discussion}
\label{sec6}
In this paper, we have presented a low-complexity decoding for one-sided SNC-SUICP. Our decoding procedure explicitly identified the side-information subset as well as the subset of broadcast transmissions to be used by any receiver. Since the number of broadcast transmissions to be used by every receiver is known apriori, when the index code is used in a noisy broadcast channel, its relative probability of error performance among different receivers is predictable. Further more the set of receivers which would use exactly one broadcast transmission was identified.  Some of the interesting directions of further research are as follows:  
\begin{itemize}
\item %An unicast index coding problem can be described by using a side-information graph \cite{YBJK}. In a given index coding problem with side-information graph $G$, an edge $e\in E$ is said to be critical if the removal of $e$ from $G$ strictly reduce the capacity. The index coding problem is critical if every $e\in E$ is critical \cite{TSG}. 
It will be interesting to characterize the class of non-symmetric index codes described in Corollary \ref{cor1} and the their side-information graphs. We conjecture that these non-symmetric codes are  critical index coding problems \cite{TSG}. 
\item The parameters $l$ as well as the down-distance parameter $p_k$ play important roles in getting the Boolean expression as well as number of broadcast transmissions used by a receiver. Closed form expressions or bounds for these for a given set of values of $K$ and $D$ will be very useful.
\item A natural direction for further research is to extend the results of this paper to two-sided SNC-SUICP and vector linear codes.
\end{itemize}
%%%%%%%%%%%%%%%%%%%%%%%%%%%%%%%%%%%%%%%%%%%
%%%%%%%%%%%%%%%%%%%%%%%%%%%%%%%%%%%%%%%%%%%%%%%%%%
\section*{APPENDIX A}
\subsection*{Proof of Theorem \ref{thm1}}

\textit{Case (i)}: Let $k \in C_i$ for $i \in [0:\lceil\frac{l}{2}\rceil]$ if $l$ is even and  $i \in [0:\lceil\frac{l}{2}\rceil-1]$ if  $l$ is odd. 

In this case, as shown in Figure \ref{bfig1}, the $k$th column of $\mathbf{L}$ intersects with $\mathbf{I}_{K-D},\mathbf{I}_{\lambda_{2i} \times  \beta_{2i}\lambda_{2i}}$ and $\mathbf{I}_{\beta_{2j-1}\lambda_{2j-1} \times \lambda_{2j-1}}$ for $j=1,2,\ldots,i$. As can be seen from Figure \ref{bfig1}, the $k$th column comprises of $2+\beta_1\lambda_1+\beta_3\lambda_3+\ldots+\beta_{2i-1}\lambda_{2i-1}$ number of $1$s, the positions of which are given below.
\begin{itemize}
\item $k$
\item $k+j\lambda_1$ for $j=1,2,\ldots,\beta_1$
\item $k+\beta_1\lambda_1+j\lambda_3$ for $j=1,2,\ldots,\beta_3$
\item $k+\beta_1\lambda_1+\beta_3\lambda_3+j\lambda_5$ for $j=1,2,\ldots,\beta_5$ \\
~~~~~~~~~~~~~~~~~~~~~$\ldots$
\item $k+\beta_1\lambda_1+\beta_3\lambda_3+\ldots+\beta_{2i-3}\lambda_{2i-3}+j\lambda_{2i-1}$ for $j=1,2,\ldots,\beta_{2i-1}$
\item $(K-\lambda_{2i})+(k-(K-D-\lambda_{2i-1}))~\text{\textit{mod}}~\lambda_{2i}.$
\end{itemize}

Since, the $k$th broadcast symbol is obtained by multiplying $k$th column of $\mathbf{L}$ with the message vector $[x_0~x_1~\ldots~x_{K-1}]$, the $k$th broadcast symbol can be written as 
%%%%%%%%%%%%%%%%%%%%
\begin{align}
\label{apexcode1}
\nonumber
c_k=x_k&+\sum_{j=1}^{\beta_1} x_{k+j\lambda_1}+\sum_{j=1}^{\beta_3} x_{k+\beta_1\lambda_1+j\lambda_3}\\&
\nonumber
+\sum_{j=1}^{\beta_5} x_{k+\beta_1\lambda_1+\beta_3\lambda_3+j\lambda_5}\\&
\nonumber
+\ldots \\&
\nonumber
+\sum_{j=1}^{\beta_{2i-1}} x_{k+\beta_1\lambda_1+\beta_3\lambda_3+\ldots+\beta_{2i-3}\lambda_{2i-3}+j\lambda_{2i-1}}\\&
+x_{(K-\lambda_{2i})+(k-K+D+\lambda_{2i-1})~\text{\textit{mod}}~\lambda_{2i} }.
\end{align}
%%%%%%%%%%%%%%
\textit{Case (ii)}: $l$ is odd and $k \in C_{\lceil\frac{l}{2}\rceil}$

In this case, the $k$th column of $\mathbf{L}$ intersects with $\mathbf{I}_{K-D}$ and $\mathbf{I}_{\beta_{2i-1}\lambda_{2i-1} \times \lambda_{2i-1}}$ for $i=1,2,\ldots,2\lceil\frac{l}{2}\rceil$. The $k$th column comprises of $1+\beta_1\lambda_1+\beta_3\lambda_3+\ldots+\beta_{l}\lambda_{l}$ number of $1$s. The positions of $1$s in $k$th column are given below.
\begin{itemize}
\item $k$
\item $k+j\lambda_1$ for $j=1,2,\ldots,\beta_1$
\item $k+\beta_1\lambda_1+j\lambda_3$ for $j=1,2,\ldots,\beta_3$
\item $k+\beta_1\lambda_1+\beta_3\lambda_3+j\lambda_5$ for $j=1,2,\ldots,\beta_5$\\
~~~~~~~~~~~ $\ldots$
\item $k+\beta_1\lambda_1+\beta_3\lambda_3+\ldots+j\lambda_{l}$ for $j=1,2,\ldots,\beta_{l}$.
\end{itemize}
%%%%%%%%%%%%%%%%%%%%%%%%%%%%%%%%%%%%%%%
Hence, the $k$th broadcast symbol can be written as 
\begin{align}
\label{apexcode2}
\nonumber
c_k=x_k&+\sum_{j=1}^{\beta_1} x_{k+j\lambda_1}+\sum_{j=1}^{\beta_3} x_{k+\beta_1\lambda_1+j\lambda_3}\\&
\nonumber
+\sum_{j=1}^{\beta_5} x_{k+\beta_1\lambda_1+\beta_3\lambda_3+j\lambda_5}+\ldots \\&
+\sum_{j=1}^{\beta_{l}} x_{k+\beta_1\lambda_1+\ldots+\beta_{l-2}\lambda_{l-2}+j\lambda_{l}}.
\end{align}

The number of terms in \eqref{apexcode1} depends on the value of $k$. That is, the term $\sum_{j=1}^{\beta_1} x_{k+j\lambda_1}$ is present in $c_k$ if $k \in C_1 \cup C_2 \cup \ldots \cup C_{\lceil\frac{l}{2}\rceil}$. The term $\sum_{j=1}^{\beta_3} x_{k+\beta_1\lambda_1+j\lambda_3}$ is present in $c_k$ if $k \in C_2\cup C_3 \cup \ldots \cup C_{\lceil\frac{l}{2}\rceil}$. Indicator function can be used to mark the presence of a given term in $c_k$. By combining \eqref{apexcode1} and \eqref{apexcode2}, we get the Boolean expression given in \eqref{code}.
%%%%%%%%%%%%%%%%%%%%%%%%%%%%%%%%%%%%%%%%%%%%%%
\begin{figure*}
\centering
\includegraphics[scale=0.6]{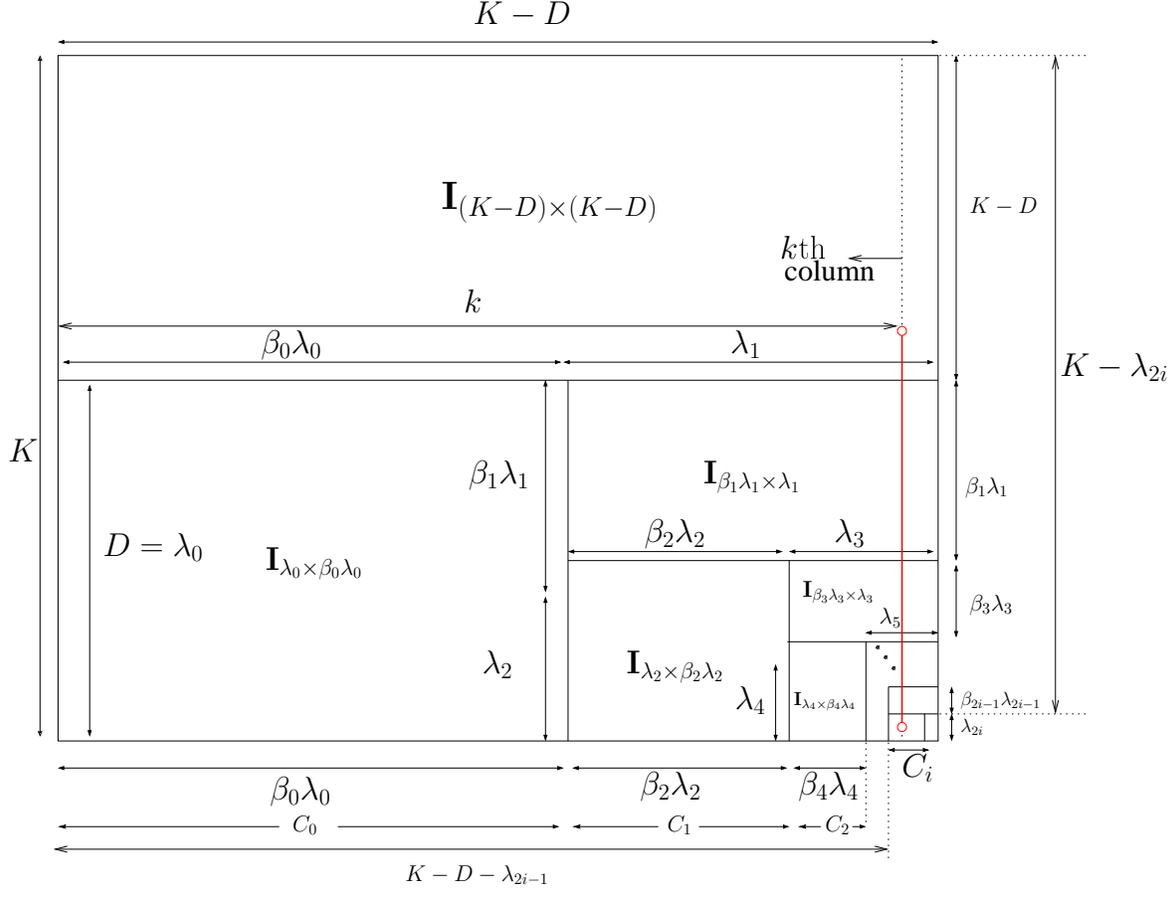}\\
\caption{Boolean Expression for $k$th broadcast symbol}
\label{bfig1}
\end{figure*}
%%%%%%%%%%%%%%%%%%%%%%%%%%%%%%%%%%%%%%%%%%%%%%%%%%%%%%%%%%
%%%%%%%%%%%%%%%%%%%%%%%%%%%%%%%%%%%%%%%%%%%%%%%%%%%%%%%%%%
\section*{APPENDIX B}
\subsection*{Proof of Lemma \ref{lemma1} }
\textit{Case (i)}: $l$ is even and $k \in C_i$ for $i \in [0:\lceil\frac{l}{2}\rceil]$ or  $l$ is odd and $k \in C_i$ for $i \in [0:\lceil\frac{l}{2}\rceil-1]$.

In this case, from the definition of down distance, we have $L(k+d_{down}(k),k) \in \mathbf{I}_{ \lambda_{2i} \times \beta_{2i}\lambda_{2i}}$.
\begin{figure*}
\centering
\includegraphics[scale=0.67]{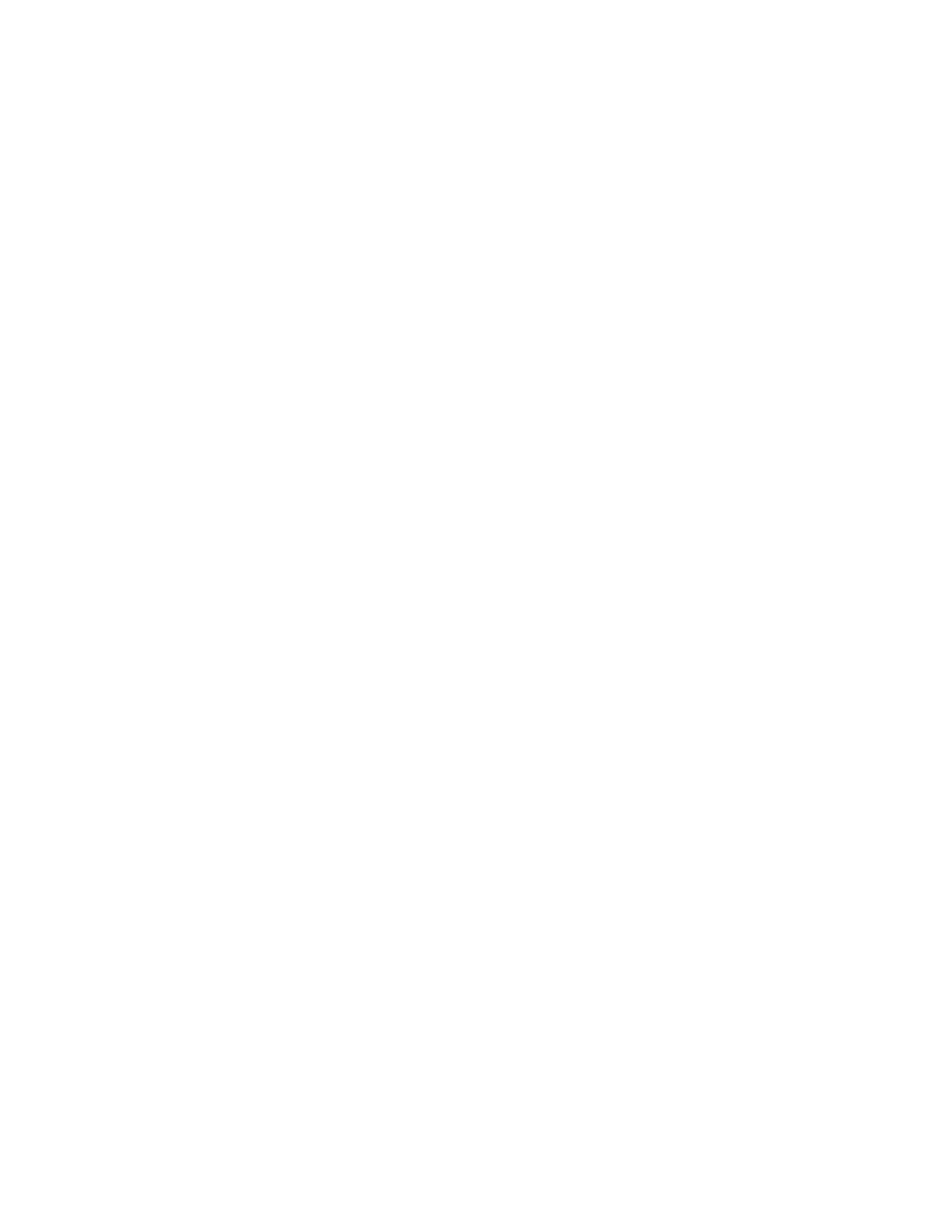}\\
\caption{Maximum-down distance calculation}
\label{afig1}
\end{figure*}
Let $k~\text{mod}~(K-D-\lambda_{2i-1})=c\lambda_{2i}+d$ for some positive integers $c$ and $d$ $(d<\lambda_{2i})$. 
From Figure \ref{afig1}, we have 
\begin{align}
\label{aeq1}
d_{down}(k)=d_1+d_2+d_3,
\end{align}
and 
\begin{align}
\label{aeq2}
\nonumber
&d_1=(K-D)-k, \\&
\nonumber
d_2=D-\lambda_{2i}, \\&
d_3=k-(K-D-\lambda_{2i-1})-c\lambda_{2i}.
\end{align}

By using \eqref{aeq1} and \eqref{aeq2}, we have 
\begin{align}
\label{aeq3}
\nonumber
d_{down}(k)&=d_1+d_2+d_3\\&
\nonumber
=(K-D)-k+\underbrace{D-\lambda_{2i}}_{d_2}\\&+\underbrace{k-(K-D-\lambda_{2i-1})-c\lambda_{2i}}_{d_3}\\&
\nonumber
=D+\lambda_{2i-1}-(c+1)\lambda_{2i}.
\end{align}

By replacing $\lambda_{2i-1}$ with $\beta_{2i}\lambda_{2i}+\lambda_{2i+1}$ in \eqref{aeq3}, we get 
\begin{align*}
d_{down}(k)=D+\lambda_{2i+1}+(\beta_{2i}-1-c)\lambda_{2i}.
\end{align*}
%%%%%%%%%%%%
\textit{Case (ii)}: $l$ is odd and $k \in C_{\lceil\frac{l}{2}\rceil}$.

In this case, from the definition of down distance, we have $L(k+d_{down}(k),k) \in \mathbf{I}_{\beta_{l}\lambda_{l} \times \lambda_{l}}$.
\begin{figure*}
\centering
\includegraphics[scale=0.67]{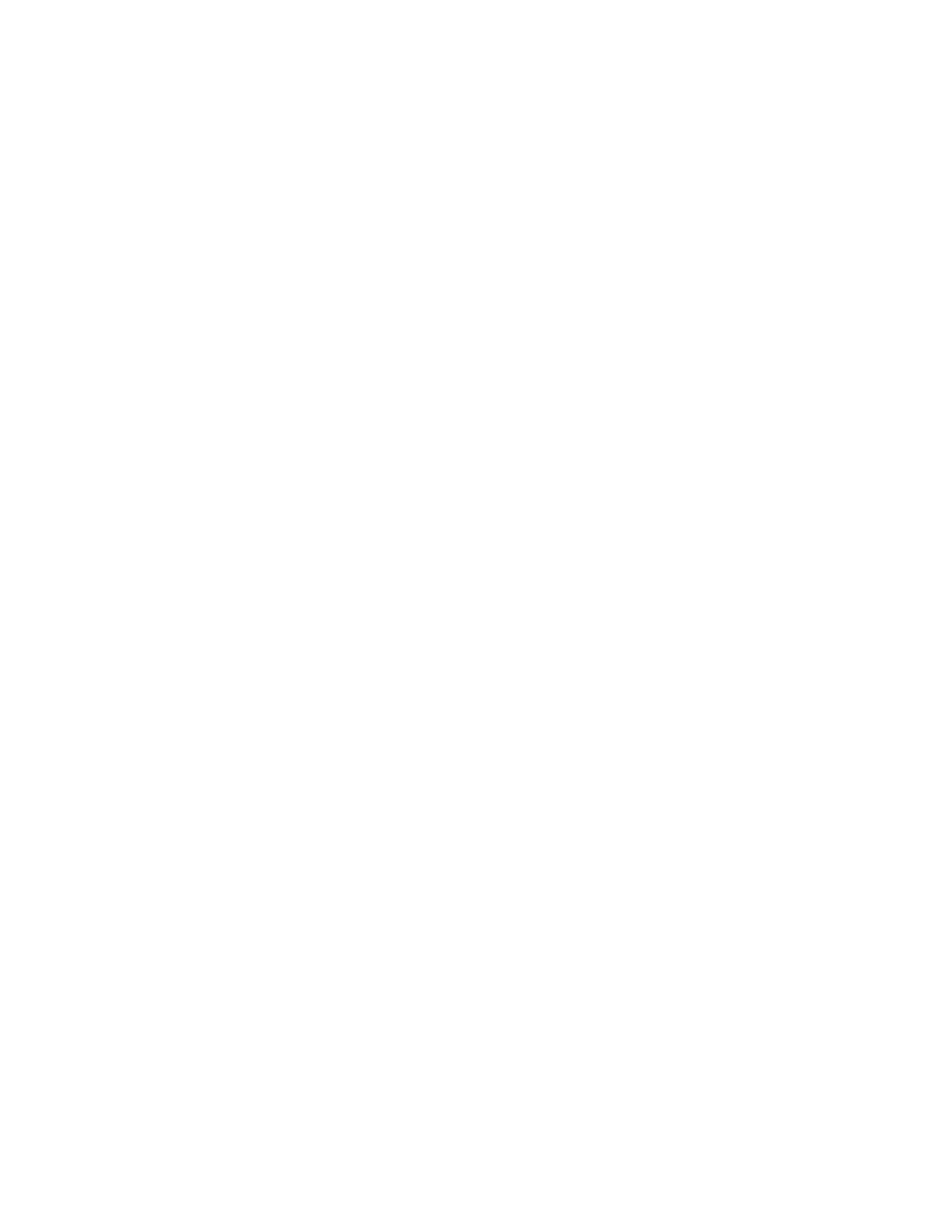}\\
\caption{Maximum-down distance calculation}
\label{afig2}
\end{figure*}
%%%%%%%%%%%%%%%%%%%%%%%%%%%%%%%%%%%%%%%%%%%%%%%%%%%%%%%%%%%%
%%%%%%%%%%%%%%%%%%%%%%%%%%%%%%%%%%%%%%%%%%%%%%%%%%%%%%%%%%%%
From Figure \ref{afig2}, we have 
\begin{align}
\label{aeq32}
d_{down}(k)=d_1+d_2+d_3,
\end{align}
and
\begin{align}
\label{aeq4}
\nonumber
&d_1=(K-D)-k, \\&
\nonumber
d_2=D-\beta_l\lambda_{l}, \\&
d_3=\beta_l\lambda_l-d_5.
\end{align}

We have $L(k,k) \in \mathbf{I}_{K-D}$ and $L(k+d_{down}(k),k) \in \mathbf{I}_{\lambda_l}$ of $\mathbf{I}_{\beta_{l}\lambda_{l} \times \lambda_{l}}$ as shown in Figure \ref{afig2}. Hence, we have $d_1=d_4$ and $d_4=d_5$.
By using \eqref{aeq32} and \eqref{aeq4}, we have 
\begin{align*}
%\label{aeq31}
\nonumber
d_{down}(k)&=d_1+d_2+d_3\\&
\nonumber
=d_1+\underbrace{D-\beta_l\lambda_{l}}_{d_2}+\underbrace{\beta_l\lambda_l-d_1}_{d_3}=D.
\end{align*}

For $i=\left \lceil \frac{l}{2} \right \rceil$, we have $\lambda_{2\left \lceil \frac{i}{2} \right \rceil}=\lambda_{2\left \lceil \frac{i}{2} \right \rceil+1}=0$. We can write $D$ as $D+\lambda_{2\left \lceil \frac{l}{2} \right \rceil+1}+(\beta_{2\left\lceil\frac{l}{2}\right\rceil}-1-c)\lambda_{2\left \lceil \frac{l}{2} \right \rceil}$. Hence 
\begin{align*}
d_{down}(k)=D+\lambda_{2i+1}+(\beta_{2i}-1-c)\lambda_{2i}.
\end{align*}

%%%%%%%%%%%%%%%%%%%%%%%%%%%%%%%%%%%%%%%%%%%%%%%%%%%%%%%%%
\section*{APPENDIX C}
\subsection*{Proof of Lemma \ref{lemma2}}

\textit{Case (i)}: $\mathbf{L}(j,k) \in \mathbf{I}_{\beta_{2i+1} \lambda_{2i+1} \times \lambda_{2i+1}}$  (one of the odd-submatrices) for $i=0,1,\ldots,\lceil \frac{l}{2}\rceil-1$.

In this case $d_{up}(j,k)$ is $\lambda_{2i+1}$ follows directly from the definition in \eqref{rcmatrix} and construction of AIR matrix. Figure \ref{afig112} illustrates this.

%%%%%%%%%%%%%%%%%%%%%
\begin{figure}
\centering
\includegraphics[scale=0.80]{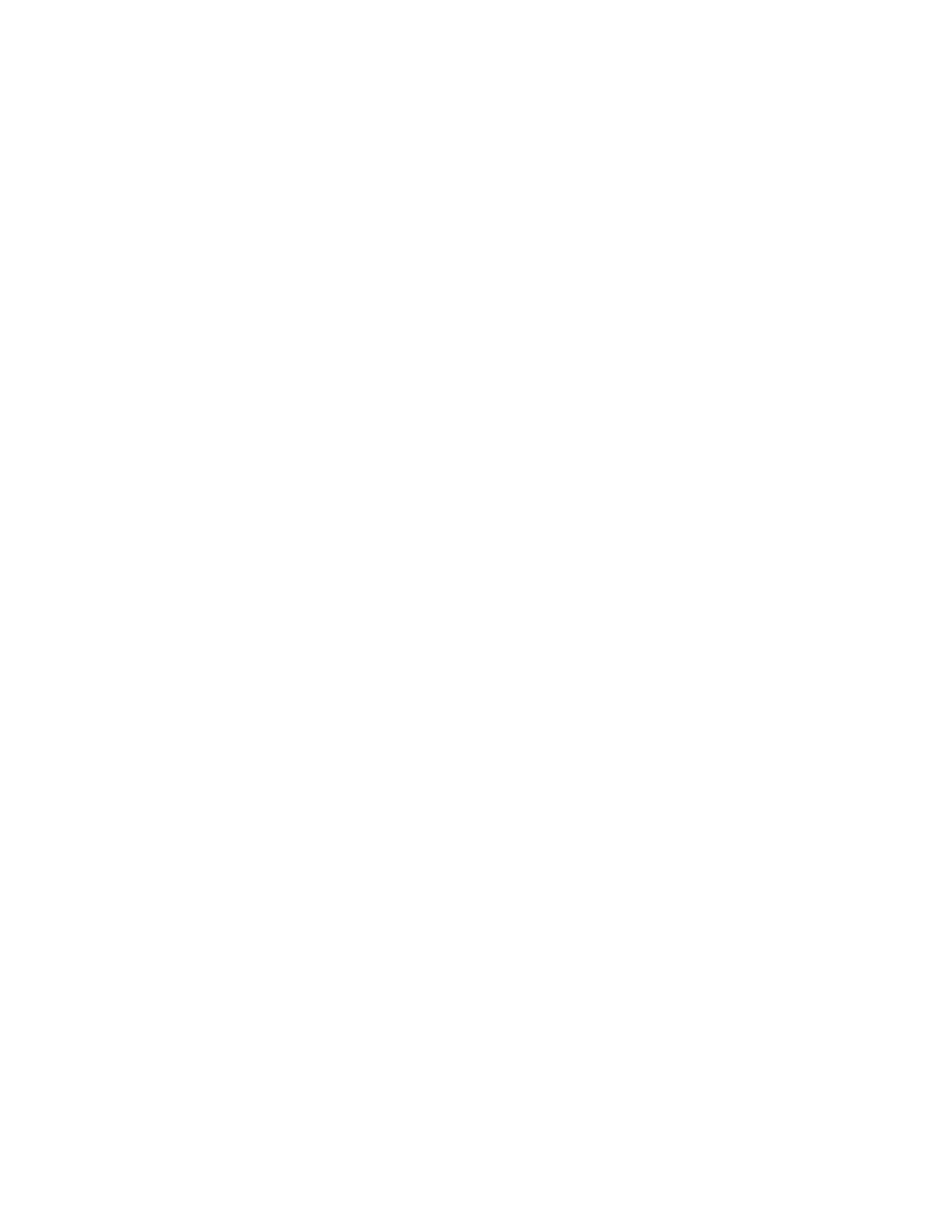}\\
\caption{up-distance calculation}
\label{afig112}
\end{figure}
%%%%%%%%%%%%%%%%%%%%%%%%%%%%%%%%%%%%%%%%%%%%%%%%%%%%%%%
\textit{Case (ii)}:$\mathbf{L}(j,k) \in \mathbf{I}_{\lambda_{2i} \times \beta_{2i} \lambda_{2i}}$ (one of the even-submatrices) for $i=0,1,2,\ldots,\lfloor \frac{l}{2}\rfloor$

From the definition of up distance, we have $L(j-d_{up}(j,k),k) \in \mathbf{I}_{\beta_{2i-1}\lambda_{2i-1} \times \lambda_{2i-1}}$.
\begin{figure*}
\centering
\includegraphics[scale=1.20]{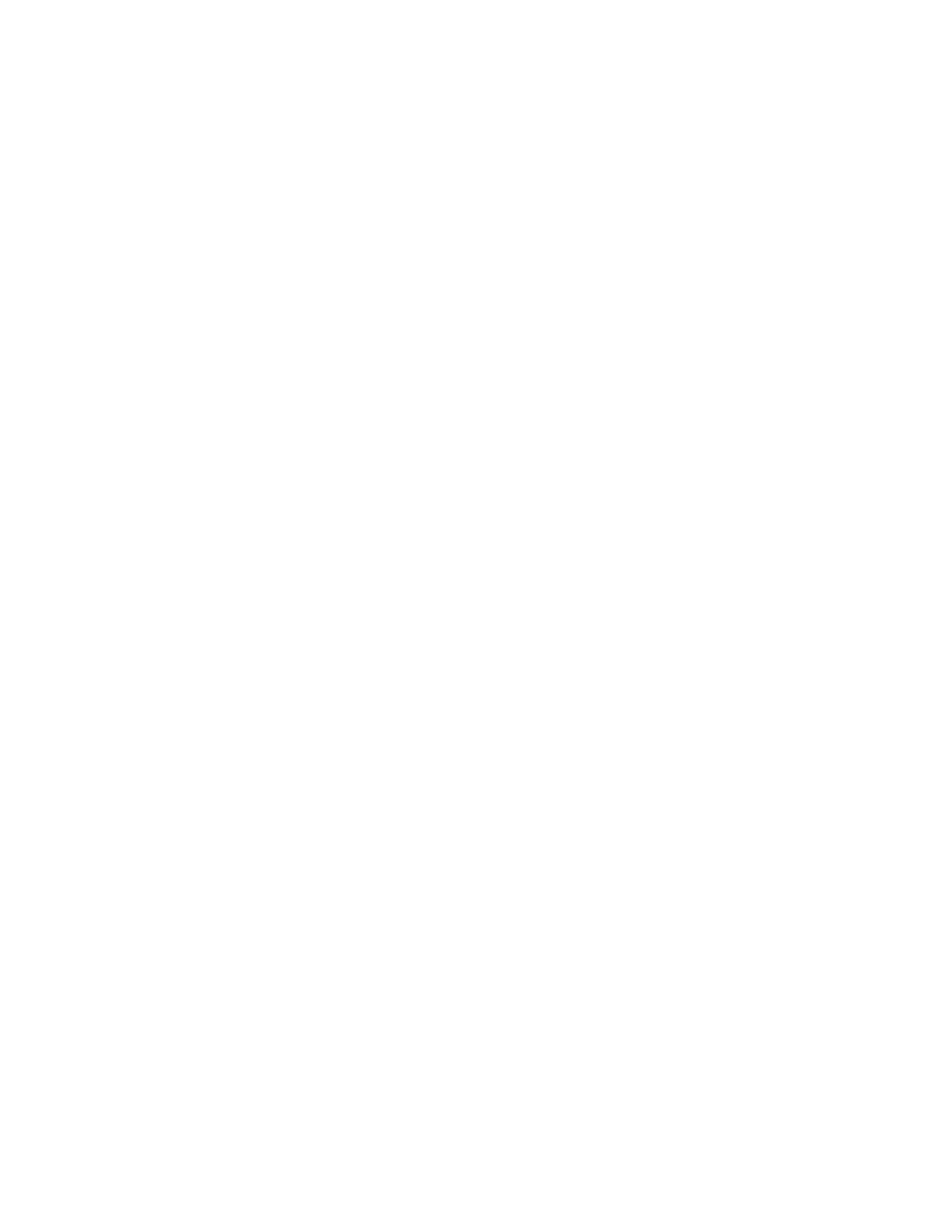}\\
\caption{up-distance calculation}
\label{afig11}
\end{figure*}

We have $k_R=k~\text{mod}~(K-D-\lambda_{2i-1})=c\lambda_{2i}+d$ for some positive integers $c$ and $d$ $(d<\lambda_{2i})$. 
From Figure \ref{afig11}, we have 
\begin{align*}
&d_{up}(k)=d_1+d_2,\\&
d_1=\lambda_{2i-1}-k_R~\text{and}~\\&
d_2=d=k_R-c\lambda_{2i}.
\end{align*}

Hence, we have 
\begin{align*}
d_{up}(k)&=\underbrace{\lambda_{2i-1}-k_R}_{d_1}+\underbrace{k_R-c\lambda_{2i}}_{d_2}=\lambda_{2i-1}-c\lambda_{2i}.
\end{align*}

%%%%%%%%%%%%%%%%%%%%%%%%%%%%%%%%%%%%%%%%%%%%%%%%%%%%%%%%%%%%%%%%%
\section*{APPENDIX D}
\subsection*{Proof of Lemma \ref{lemma3}}
\textit{Case (i)}:  $\mathbf{L}(j,k) \in \mathbf{I}_{\lambda_{2i} \times \beta_{2i} \lambda_{2i} }$  for $i\in [0:\lfloor \frac{l}{2}\rfloor]$ and $k_R \in [0:(\beta_{2i}-1)\lambda_{2i}-1]$. 

In this case $d_{right}(j,k)$ is $\lambda_{2i}$ follows directly from the definition in \eqref{rcmatrix}. 

\textit{Case (ii)}:  $\mathbf{L}(j,k) \in \mathbf{I}_{\lambda_{2i} \times  \beta_{2i} \lambda_{2i}}$  for $i\in [0:\lfloor \frac{l}{2}\rfloor]$ and $k_R \in [(\beta_{2i}-1)\lambda_{2i}:\beta_{2i} \lambda_{2i}-1]$. 

In this case, from the definition of right-distance, we have $L(j, k+d_{right}(j,k)) \in \mathbf{I}_{\beta_{2i+1}\lambda_{2i+1} \times \lambda_{2i+1}}$. We have $k_R=k~\text{mod}~(K-D-\lambda_{2i})=c\lambda_{2i+1}+d$ for some positive integers $c$ and $d$ $(d<\lambda_{2i+1})$. 
From Figure \ref{afig21}, we have 
\begin{align*}
&d_{right}(j,k)=d_1+d_2,\\&
d_1=\lambda_{2i}-d_3, \\&
d_3=k_R~\text{and}~d=d_2.
\end{align*}
%%%%%%%%%%%%%%%%%%%%%%%%%%%%%%%%%%%%%%%%%%%%%%%%%%%%%%%%%%%%%%
Hence, 
\begin{align*}
d_{right}(j,k)=\underbrace{\lambda_{2i}-k_R}_{d_1}+\underbrace{d}_{d_2}&=\lambda_{2i}-c\lambda_{2i+1}-d+d \\&
=\lambda_{2i}-c\lambda_{2i+1}.
\end{align*}
%%%%%%%%%%%%%%%
\begin{figure*}
\centering
\includegraphics[scale=1.02]{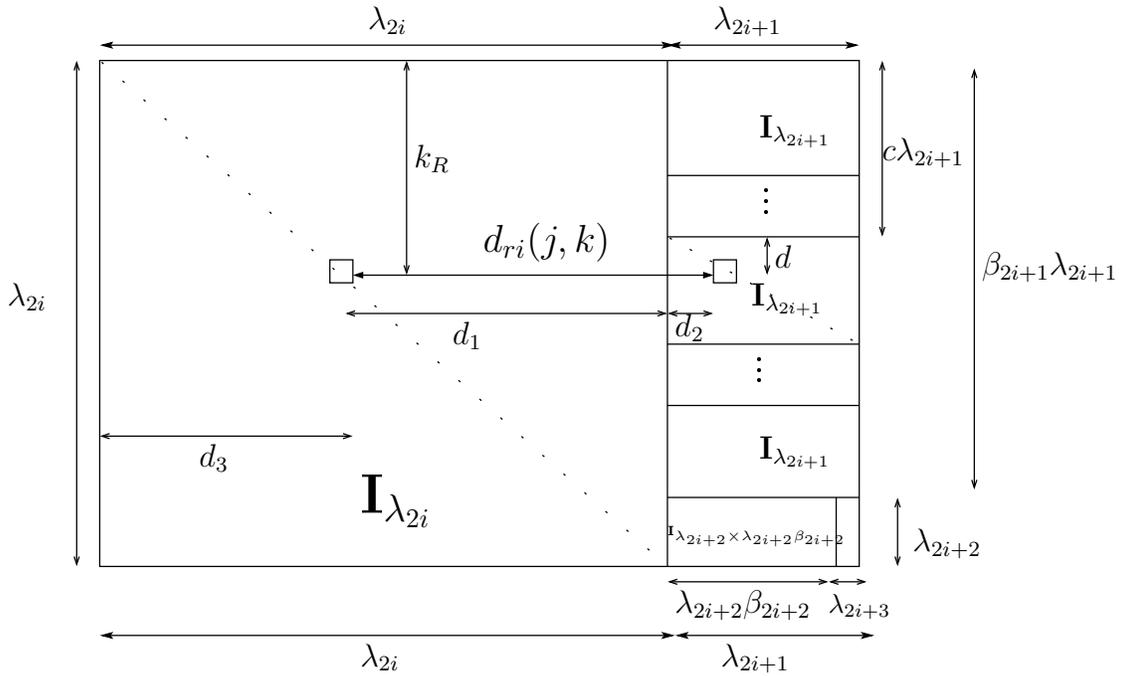}\\
\caption{right distance calculation}
\label{afig21}
\end{figure*}
%%%%%%%%%%%%%%%%%%%%%%%%%%%%%
\section*{APPENDIX E}
\subsection*{Proof of Theorem \ref{thm2}}

{\bf Case (i):} $k \in D_{i}$  for $i=0,1,2,..\lceil \frac{l}{2}\rceil$.

In this case, we have $k_R \in [0:(\beta_{2i}-1)\lambda_{2i}-1]$  for $i=0,1,2,..\lfloor \frac{l}{2}\rfloor$. Let $k_R=c\lambda_{2i}+d$ for some positive integers $c$ and $d$ and $d<\lambda_{2i}$. From Lemma \ref{lemma3}, we have $\mu_{k}=\lambda_{2i}$. From Lemma \ref{lemma1}, we have 
\begin{align}
\label{fact1}
d_{down}(k)&=D+\lambda_{2i+1}+(\beta_{2i}-1-c)\lambda_{2i}\\&
\nonumber 
=D+\lambda_{2i-1}-(c+1)\lambda_{2i}.
\end{align}

From Lemma \ref{lemma2}, we have 
\begin{align}
\label{fact2}
\nonumber
&d_{up}(k+d_{down}(k),k)=\lambda_{2i-1}-c\lambda_{2i}~~\text{and}\\& d_{up}(k+d_{down}(k),k+\mu_{k})=\lambda_{2i-1}-(c+1)\lambda_{2i}.
\end{align}

In this case, we have  $k+\mu_k=k+\lambda_{2i} \in C_i$, hence $\mathbf{L}(k+d_{down}(k),k+\mu_k) \in \mathbf{I}_{\lambda_{2i} \times \beta_{2i}\lambda_{2i}}$ and $p_k=0$. 

From \eqref{fact1} and \eqref{fact2}
\begin{align}
\label{fact3}
d_{down}(k)-d_{up}(k&+d_{down}(k),k+\mu_{k})=D,
\end{align}
this indicates that all message symbols in $c_{k+\mu_{k}}$ are in the side-information of $R_{k}$ except $x_{k+d_{down}(k)}$. From \eqref{fact2} and \eqref{fact3}, every message symbol in $c_{k}+c_{k+\mu_{k}}$ is in the side-information of $R_{k}$. From \eqref{code}, we can write $c_k$ as
\begin{align*}
c_k=x_k+\underbrace{\sum_{z=1}^{i}\sum_{j=1}^{\beta_{2z-1}}x_{k+D-\lambda_{2z-2}+j\lambda_{2z-1}}}_{\nu_k}+x_{k+d_{down}(k)}.
\end{align*}

Hence, the term $\nu_k$ in \eqref{case11} is the exclusive OR of the set of message symbols present in $c_k$ excluding the message symbols $x_k$ and $x_{k+d_{down}(k)}$. The term $\nu_{k+\mu_k}$ in \eqref{case11} is the exclusive OR of the set of message symbols present in $c_{k+\mu_k}$ excluding the message symbol $x_{k+d_{down}(k)}$. From \eqref{fact2} and \eqref{fact3}, $\nu_{k}$ and $\nu_{k+\mu_k}$ are known to receiver $R_k$. Hence, $R_k$ can decode $x_k$ from \eqref{case1}. 

%%%%%%%%%%%%%%%%%%%%%%%%%%%%%%%%%%%%%%%%%%%%%%%%%%%%%%%%%%%
{\bf Case (ii):} $k \in E_{i}$  for $i=0,1,2,..\lceil\frac{l}{2}\rceil-1$.

In this case, we have $k_R \in [(\beta_{2i}-1)\lambda_{2i}:\beta_{2i}\lambda_{2i}-1]$ for $i=0,1,2,..\lfloor \frac{l}{2}\rfloor-1$. Let $k_R=(\beta_{2i}-1)\lambda_{2i}+c\lambda_{2i+1}+d$ for some positive integers $c,d \ (d<\lambda_{2i+1})$. We have $k=K-D-\lambda_{2i-1}+k_R$. From Lemma \ref{lemma1}, we have 
\begin{align}
\label{fact5}
\nonumber
d_{down}(k)&=D+\lambda_{2i+1}+(\beta_{2i}-1-(\beta_{2i}-1))\lambda_{2i}
\\&=D+\lambda_{2i+1}. 
\end{align}
From Lemma \ref{lemma3}, we have 
\begin{align}
\label{fact6}
\nonumber
\mu_{k}&=d_{right}(k+d_{down}(k),k)\\&
\nonumber
=d_{right}(K-D-\lambda_{2i-1}+k_R+D+\lambda_{2i+1},k)\\&
\nonumber
=d_{right}(K-\lambda_{2i}+c\lambda_{2i+1}+d,k)\\&
=\lambda_{2i}-c\lambda_{2i+1}.
\end{align}
From Lemma \ref{lemma2}, we have
\begin{align}
\label{fact7}
d_{up}(k+d_{down}(k),k+\mu_{k})=\lambda_{2i+1}.
\end{align}
From \eqref{fact5} and \eqref{fact7}
\begin{align}
\label{fact8}
d_{down}(k)-d_{up}(k+d_{down}(k),k+\mu_{k})=D,
\end{align}
this indicates that all other message symbols in $c_{k+\mu_{k}}$ are in the side-information of $R_{k}$ except $x_{k+d_{down}(k)}$ and $x_{k+t_{k,\tau}+d_{down}(k)}$ for $\tau=1,2,\ldots,p_{k}$. %Fig. \ref{sfig4} is useful to understand this.
 
From Definition \ref{def1}, $k+d_{down}(k)+t_{k,p_{k}}$ is always less than the number of rows in the matrix $\mathbf{L}$. That is, $k+t_{k,p_k}+d_{down}(k)<K$. Hence, we have 
%%%%%%%%%%%%%%%%%%%%%%%%%%%%%%%%%%%%%%%%%%%%%%%%%%%%
\begin{align}
\label{fact9}
\nonumber
t_{k,p_k}&< K-k-d_{down}(k)\\&
\nonumber
=K-(K-D-\lambda_{2i-1}+(\beta_{2i}-1)\lambda_{2i}+c\lambda_{2i+1}+d)-\\&
\nonumber
~~~~~~~~~~~~~~~~~~~~~~~~~~~~~~~~(D+\lambda_{2i+1})\\&
=\lambda_{2i}-c\lambda_{2i+1}-d.
\end{align}
From \eqref{fact6} and \eqref{fact9}
\begin{align}
\label{relation2}
t_{k,p_{k}} < \mu_{k}-d.
\end{align}
From \eqref{fact9}, we have 
\begin{align*}
k_R+t_{k,p_{k}}&<k_R+\lambda_{2i}-c\lambda_{2i+1}-d\\&=\underbrace{(\beta_{2i}-1)\lambda_{2i}+c\lambda_{2i+1}+d}_{k_R}+\lambda_{2i}-c\lambda_{2i+1}-d\\&=\beta_{2i}\lambda_{2i}.
\end{align*}
Hence,
\begin{align*}
k_R+t_{k,p_{k}} \in [(\beta_{2i}-1)\lambda_{2i}:\beta_{2i}\lambda_{2i}-1]
\end{align*}
and
\begin{align}
\label{relation3}
\mathbf{L}(k+t_{k,\tau}+d_{down}(k+t_{k,\tau}),k+t_{k,\tau}) \in \mathbf{C}_{\lambda_{2i} \times  \beta_{2i} \lambda_{2i}}
\end{align}
for $\tau=1,2,\ldots,p_{k}$. Hence, we have  
\begin{align}
d_{down}(k)=d_{down}(k+t_{k,\tau})
\end{align}
for $\tau=1,2,\ldots,p_{k}$. 

From Lemma \ref{lemma2} and \eqref{relation3}, for $\mathbf{L}(k+t_{k,\tau}+d_{down}(k+t_{k,\tau}),k+t_{k,\tau})$, we have,
\begin{align}
\label{fact10}
d_{up}(k+t_{k,\tau}+d_{down}(k+t_{k,\tau}),k+t_{k,\tau})=\lambda_{2i}+\lambda_{2i+1}.
\end{align}
From \eqref{fact7} and \eqref{fact10}, we have 
\begin{align}
\label{fact15}
\nonumber
&d_{up}(k+t_{k,\tau}+d_{down}(k+t_{k,\tau}),k+t_{k,\tau})-\\&d_{up}(k+d_{down}(k),k+\mu_{k})=\lambda_{2i},
\end{align}
which along with \eqref{fact8} indicates that every message symbol in $c_{k+t_{k,\tau}}$ is in the side-information of $R_{k}$ except $x_{k+t_{k,\tau}+d_{down}(k+t_{k,\tau})}$ for $\tau=1,2,\ldots,p_k$. 

%Fig. \ref{sfig7} illustrate this. In Fig. \ref{sfig7}, squares are used to represent the positions of $1$s and shaded area is used to represent the positions of zeros. The squares marked in red and blue are interference to receiver $R_k$. The interference can be canceled by adding  $c_{k},c_{k+\mu_{k}}$ and $c_{k,t_{k,\tau}}$.
%%%%%%%%%%%%%%%%%%%%%%%%%%%%%%%%%%%%%%%%%%%%%%
%%%%%%%%%%%%%%%%%%%%%%%%%%%%%%%%%%%%%%%%%%%%%%%%%%%%%%%%%%%%%%%%%%%%%%%%%%%
%\begin{figure}
%\centering
%\includegraphics[scale=0.30]{proof101.eps}\\
%\caption{Decoding for $k \in E_i$ for $i \in \lceil\frac{l}{2}\rceil-1$.}
%\label{sfig7}
%\end{figure}

We have $k_R \in [(\beta_{2i}-1)\lambda_{2i}:\beta_{2i}\lambda_{2i}-1]$, $d_{up}(k+d_{down}(k),k)=\lambda_{2i}+\lambda_{2i+1}$. From \eqref{fact10}, we have
\begin{align}
\label{fact11}
\nonumber
d_{up}(k+d_{down}(k),k)-d_{up}(k&+d_{down}(k),k+\mu_{k})\\&=\lambda_{2i},
\end{align}
and this along with \eqref{fact8} indicates that every message symbol in $c_{k}$ is in the side-information of $R_{k}$ except $x_{k+d_{down}(k)}$. 

From \eqref{fact8},\eqref{fact15} and \eqref{fact11}, the interfering message symbol $x_{k+t_{k,\tau}+d_{down}(k)}$ in $c_{k+\mu_{k}}$ can be canceled by adding the broadcast symbol $c_{k+t_{k,\tau}}$ for $\tau=1,2,\ldots,p_k$  and the interfering message symbol $x_{k+d_{down}(k)}$ in $c_{k+\mu_{k}}$ can be canceled by adding the broadcast symbol $c_{k}$. Hence, $R_k$ can decode the message symbol $x_k$ by adding the broadcast symbols $c_{k},c_{k+\mu_{k}}$ and  $c_{k+t_{k,\tau}}$ for $\tau=1,2,\ldots,p_{k}$. 

The term $\nu_k$ in \eqref{case21} is the exclusive OR of the set of message symbols present in $c_k$ excluding the message symbols $x_k$ and $x_{k+d_{down}(k)}$. The term $\nu_{k+t_{k,\tau}}$ for $\tau \in [1:p_k]$ in \eqref{case21} is the exclusive OR of the set of message symbols present in $c_{k+t_{k,\tau}}$ excluding the message symbol $x_{k+t_{k,\tau}+d_{down}(k+t_{k,\tau})}$. The term $\nu_{k+\mu_k}$ in \eqref{case21} is the exclusive OR of the set of message symbols present in $c_{k+\mu_k}$ excluding the message symbols $x_{k+d_{down}(k)}$ and $x_{k+t_{k,\tau}+d_{down}(k+t_{k,\tau})}$ for $\tau \in [1:p_k]$. From \eqref{fact8}, \eqref{fact15} and \eqref{fact11}, $\nu_{k},\nu_{k+\mu_k}$ and $\nu_{t_{k,\tau}}$ for $\tau \in [1:p_k]$ are known to receiver $R_k$. Hence, $R_k$ can decode $x_k$ from \eqref{case2}. 
%%%%%%%%%%%%%%%%%%%%%%%%%%%%%%%%%%%%%%%%%%
{\bf Case (iii):} $k \in [K-D-\lambda_l:K-D-1]= E_{i}$  for $i=\left\lceil\frac{l}{2}\right\rceil$.

In this case, from Lemma \ref{lemma1}, we have 
\begin{align}
\label{fact51}
d_{down}(k)=D.
\end{align}
Hence, every message symbol in $c_k$ is in the side-information of $R_k$ excluding the message symbol $x_k$. The term $\nu_k$ in \eqref{case3} is the exclusive OR of the set of message symbols present in $c_k$ excluding the message symbol $x_{k}$. From \eqref{fact51}, $\nu_{k}$ is known to receiver $R_k$. Hence, $R_k$ can decode $x_k$ from \eqref{case3}.

%%%%%%%%%%%%%%%%%%%%%%%%%%%%%%%%%%%%%%%%
{\bf Case (iv):} $k \in [K-D:K-1]$

\begin{itemize}
\item  If $D < \left\lceil\frac{K}{2}\right \rceil$, from \eqref{code}, the broadcast symbol $c_{k~\text{mod}~(K-D)}$ is given by $c_{k~\text{mod}~(K-D)}=x_{k~\text{mod}~(K-D)}+x_{k~\text{mod}~(K-D)+K-D}=x_{k+D}+x_{k}$. In $c_{k~\text{mod}~(K-D)}$, the message symbol $x_{k+D}$ is in the side-information of receiver $R_{k}$. Hence, $R_{k}$ can decode its wanted message symbol $x_{k}$ from $c_{k~\text{mod}~(K-D)}$. 
\item If $D \geq \left \lceil \frac{K}{2}\right \rceil$, from \eqref{chain}, we have $\beta_0=0$ and $\lambda_1=K-D$. From Lemma \ref{lemma2}, we have 
\begin{align}
\label{fact52}
d_{up}(k,k~\text{mod}~(K-D))=\lambda_1=K-D.
\end{align}

Hence, $c_{k~\text{mod}~(K-D)}$ does not contain message symbols from the set $$\{x_{k-(K-D)+1},x_{k-(K-D)+2},\ldots,x_{k-1}\}=\mathcal{I}_k$$(interference to receiver $R_k$). Every message symbol in $c_{k^\prime}$ is in the side-information of $R_k$ excluding the message symbol $x_k$. The term $\nu_{k^{\prime}}$ in \eqref{case4} is the exclusive OR of the set of message symbols present in $c_{k^{\prime}}$ excluding the message symbol $x_{k}$. From \eqref{fact52}, $\nu_{k^{\prime}}$ is known to receiver $R_k$. Hence, $R_k$ can decode $x_k$ from \eqref{case4}.
%Figure \ref{sfig3} is useful to understand the decoding.

%\begin{figure}
%\centering
%\includegraphics[scale=0.47]{}\\
%\caption{Decoding for $k \in [K-D:K-1]$ and $D \geq \left \lceil \frac{K}{2}\right \rceil$}
%\label{sfig3}
%\end{figure}
\end{itemize}
This completes the proof.
%%%%%%%%%%%%%%%%%%%%%%%%%%%%%%%%%%%%%%%%%%%
%%%%%%%%%%%%%%%%%%%%%%%%%%%%%%%%%%%%%%%%%%%
\section*{Acknowledgment}
This work was supported partly by the Science and Engineering Research Board (SERB) of Department of Science and Technology (DST), Government of India, through J.C. Bose National Fellowship to B. Sundar Rajan
%%%%%%%%%%%%%%%%%%%%%%%%%%%%%%%%%%%%%%%%%%%%%%%%%%
%%%%%%%%%%%%%%%%%%%%%%%%%%%%%%%%%%%%%%%%%%%%%%%%%%%%%%%%%%%%%%%%%%%%%%%%%%%%%%%%%

%%%%%%%%%%%%%%%%%%%%%%%%%%%%%%%%%%%%%%%%%%%%%%%
\end{document}